\newtheorem{theorem}{Theorem}
\newtheorem{proposition}[theorem]{Proposition}
\newtheorem{definition}{Definition}
\newcommand{\ds}{\displaystyle}
\newcommand{\ra}{\rightarrow}
\newcommand{\blue}[1]{{\color{blue} #1}}
\renewcommand{\v}[1]{\ensuremath{\boldsymbol{#1}}}
\newcommand{\beqa}{\begin{eqnarray*}}
\newcommand{\eeqa}{\end{eqnarray*}}
\newcommand{\beqan}{\begin{eqnarray}}
\newcommand{\eeqan}{\end{eqnarray}}
\newcommand{\beqn}{\begin{equation}}
\newcommand{\eeqn}{\end{equation}}
\newcommand{\bE}{\boldsymbol{E}}
\newcommand{\bsy}{\boldsymbol}
\newcommand{\reals}{\ensuremath{\mathbf{R}}}
\mathchardef\mhyphen="2D
\begin{document}

\title{Heavy Traffic Approximation of Equilibria in Resource Sharing Games}

\author
{
Yu~Wu, Loc~Bui, and Ramesh~Johari
\thanks{Yu Wu and Ramesh Johari are with the Department of Management Science and Engineering, Stanford University, Stanford, CA 94305, USA (emails: \{yuwu, ramesh.johari\}@stanford.edu). Loc Bui is with the School of Engineering, Tan Tao University, Long An, Vietnam (email: locbui@ieee.org). Most of this work was done while Loc Bui was a postdoc at Stanford University.}
\thanks{A previous version of this paper was presented at the 7th Workshop on Internet \& Network Economics (WINE'11).}
\thanks{This work was supported by the National Science Foundation under
grants CMMI-0948434, CNS-0904609, CCF-0832820, and CNS-0644114, and by
the Defense Advanced Research Projects Agency under the ITMANET program.}
}


\maketitle

\begin{abstract}
We consider a model of priced resource sharing that combines
both {\em queueing behavior} and {\em strategic behavior}.  We study
a priority service model where a single  server allocates its capacity to
agents in proportion to their payment to the system, and users from different
classes act to minimize the sum of their cost for processing delay and payment.
As the exact processing time of this system is hard to compute and
cannot be characterized in closed form, 
we introduce the notion of {\em heavy traffic equilibrium}
as an approximation of the Nash equilibrium, derived by considering
the asymptotic regime where the system load approaches capacity.  We
discuss efficiency and revenue, and in particular provide a 
bound for the price of anarchy of the heavy traffic equilibrium.

%

\

{\bf Keywords:} resource sharing, discriminatory processor sharing, equilibrium, heavy traffic approximation
\end{abstract}


\section{Introduction}\label{sec:intro}

A range of resource sharing systems, such as computing or
communication services, exhibit two distinct characteristics: {\em
  queueing behavior} and {\em strategic behavior}.  Queueing behavior
arises because jobs or flows are served with the limited capacity of
system resources.  Strategic behavior arises because these jobs or
flows are typically generated by self-interested, payoff-maximizing
users.  Analysis of strategic behavior in queueing systems has a long
history, dating to the seminal work of Naor \cite{nao69}; see the book
by Hassin and Haviv \cite{hashav03} for a comprehensive survey.
The interaction of queueing and strategic behaviors has
become especially important recently, with the rise of paid resource
sharing systems such as cloud computing platforms.
For example, \cite{allgur10} and \cite{chen10} discussed systems with
multiple service providers, modeled as first-come-first-serve queues,
that compete in both price and response time for potential buyers.


In this paper we consider a particular queueing model where a single server is shared among multiple jobs, 
and the service capacity allocated to each job depends on its priority level.
The particular scheduling policy we consider is known in the
literature as the {\em discriminatory processor sharing} (DPS) policy,
introduced by Kleinrock \cite{kle67}.
In the DPS model, the server shares its capacity 
{\em in proportion} to the priority level of  all jobs currently in the
system.  This service allocation rule is a special case of a more
general scheduling policy for queueing networks known as 
{\em proportionally fair} resource sharing \cite{kelmautan98,masrob00};
such scheduling policies have been studied extensively in the context of networked
resource sharing (see \cite{kankelleewil09, verayenun10} and references
therein). A survey of the DPS literature can
also be found in \cite{altavraye06}.


We consider a DPS system in steady state, and study a {\em job level}
game where every individual job is a single strategic user.  
Each user chooses a payment $\beta$,
which corresponds to the priority level of that user.  The user also
incurs a cost proportional to total processing time. 
The users' goal is to choose priority levels to minimize the sum of
expected processing cost and payment.  
(We also briefly discuss a {\em class level} game, 
where  every {\em class} is a single user.)

This game is inspired by resource sharing in real services. For
  example, in the Amazon EC2 Spot Instances, a user can bid her own
  price (priority) and enjoy the service as long as the dynamic
  benchmark price computed by the system is lower than the bid.  The
  service is terminated either upon completion of task or when the
  system price rises above the bid price.  In many file hosting
  websites, users can purchase premium packages which increase
  upload/download bandwidth and speed, and allow parallel tasks among
  other benefits.  Note that in these services, the resource is shared
  among all users that currently request service, and a higher payment
  leads to higher performance.

A central difficulty in analysis of equilibria arises 
because exact computation of the steady state processing time of a single job, 
given the priority choices of other jobs, is not possible in closed form.
Since the queueing behavior computation itself involves
numerical complexity, equilibrium characterization in closed form 
for the strategic behavior is essentially impossible. 
Thus obtaining structural insight into the games is
a significant challenge.

To tackle this problem,
we propose an approximate approach to
equilibrium characterization that is amenable to analysis, computable
in closed form, and provably exact in an appropriate asymptotic
regime where the load on the system increases, known as the {\em heavy
  traffic} regime \cite{kin62,rei84}.  The heavy traffic asymptotic regime is
widely used in analysis of queueing systems 
and is especially
valuable to study systems with many users. 
Asymptotics yield two benefits.  {\em First}, they significantly simplify
stochastic analysis.  The {\em second} key benefit of asymptotics is that
we are also able to simplify our game theoretic analysis.  Informally,
an important reason is that when the number of users grows large, no single 
user has a large impact on the whole system; this effect allows us to
simplify calculation of equilibria. (Note that this is similar to the
``large market'' approximation used to justify competitive price
equilibria in economics.)

We conclude with a brief survey of related work.
Priority pricing problems in queueing
systems with disciplines other than DPS have been investigated,
such as design of efficient, incentive compatible pricing for
nonpreemptive priority FIFO queues \cite{mend90,kim03}.
Besides the service discipline, our work also differs from this work
because in our model users choose their own priority levels, while in
previous models the service provider fixes the priority levels
available.
Other studies focus on
optimal arrival strategy of users, including the work by Glazer and
Hassin \cite{gla83} and Jain et al. \cite{jai11}.  In our paper,
although we briefly discuss the case where arrivals are endogenously
generated, in the main discussion we assume that arrival rates exogenously given.  

Turning our attention to DPS specifically, we note that most prior
work on DPS considers only analysis of the queueing system 
without any strategic choice of priorities. 
Haviv and van der Wal \cite{havwal97} consider a DPS system
in which users choose their own priority levels to minimize their costs, 
but only study a model with one class of users.
To the best of our knowledge, there is no previous work that considers
priority pricing in the multiple class DPS system, because expected
waiting time cannot be characterized in closed form.

Our main contributions are as follows.

%

{\em (1) An approximate notion of equilibrium.}  
Using an approximation to the processing time derived via the heavy
traffic asymptotic regime, we suggest a natural corresponding notion
of equilibrium that we call {\em heavy traffic equilibrium} (HTE).  
In an HTE, users minimize the sum of their payment and heavy traffic
processing time cost, rather than their true expected processing time
cost.
We show that under mild conditions, HTE exists and is unique, 
and that it can be computed in closed form in terms of
system parameters.  It is thus both simple to compute, and
asymptotically accurate when the system approaches heavy traffic.

{\em (2) Economic analysis: parameter sensitivity, efficiency, and revenue.} 
A significant benefit of our approach is that since we can compute the equilibrium
in closed form, it is straightforward to carry out analysis of efficiency and revenue.  
We study how the system behavior changes when cost or arrival rate parameters are scaled, 
and more importantly, we investigate social efficiency and system revenue of HTE 
under different system parameters, and give a 
bound for the price of anarchy of HTE.  We obtain some intriguing
insights: in particular, we show that within a particular class of
pricing schemes, and for a wide range of parameter choices, the
incentives of the revenue maximizing service provider become aligned
with minimization of total system processing cost.

The remainder of the paper is organized as follows. In Section~\ref{sec:model},
we describe the queueing game setup. 
In Section~\ref{sec:joblevel}, we introduce the notion of heavy traffic equilibrium. 
We then present the results on parameter sensitivity, efficiency, and revenue under the heavy traffic equilibrium in Section~\ref{sec:insights}. 
Some extensions of the model are discussed in Section~\ref{sec:dis},
followed by the conclusion. The proofs of the theorems are in the Appendix.

\section{Resource Sharing Game}\label{sec:model}


We consider a queueing game in which $K$ classes of jobs share
a single server of unit capacity.
Class $i$ ($i = 1, \cdots, K$) jobs arrive according to
a Poisson process with arrival rate $\lambda_i$
and have i.i.d.~exponentially distributed service requirements
(measured in units of service, e.g., processing cycles) with mean
$1/\mu_i$. 
We assume for simplicity that
$\mu_i=\mu$ for all classes.
 Let $\lambda=\sum_k\lambda_k$ denote the total arrival rate to the
system. Also, let $\rho_i=\lambda_i/\mu$ be the {\em load} of class
$i$, and define the {\em system load} as $\rho=\sum_k\rho_k = \sum_k
\lambda_k / \mu$. To ensure stability, we assume $\rho<1$. It is well
known that under this condition, the resulting queueing system
is ergodic and possesses a unique steady state distribution \cite{kle75}.
Waiting and being served in the system induces a cost $c_i$
per unit time for users of class $i$.
Without loss of generality we assume $c_1 > c_2 > \cdots > c_K$:
if two classes $i$ and $j$ have the same cost $c_i = c_j$, 
then they can be merged into one class with arrival rate $\lambda_i + \lambda_j$.

We assume that the server allocates its capacity according to the {\em
discriminatory processor sharing} (DPS) policy.  Under this policy,
each job is associated with a priority level.  If there are currently
$N$ jobs in the system and job $\ell$ has chosen priority level
$\beta_\ell$, then the fraction of service
capacity allocated to job $\ell$ is $\beta_\ell/\sum_{m=1}^N
\beta_m$.  


Upon arrival, without observing the state of the system, each job
chooses a priority level $\beta>0$. 
We consider a family of
pricing rules for priority that we refer to as {\em $\alpha$-fair
  pricing rules}, where $\alpha > 0$.  Formally, we assume that if a job
chooses priority level $\beta$, then the system manager charges that
job a price $\beta^\alpha$, where $\alpha > 0$.  Varying $\alpha$
allows us to study a range of pricing schemes.  In particular, as
$\alpha \to 0$, jobs face a strongly diminishing marginal cost to
higher choices of $\beta$; while as $\alpha \to \infty$, jobs face a
strongly increasing marginal cost with higher choices of $\beta$.

The pricing rules we consider are closely related to {\em $\alpha$-fair
allocation rules} studied in the networking literature
\cite{MoWalrand00}.  In an $\alpha$-fair allocation system,
one unit of resource is allocated to $N$ users, 
whose utility functions are characterized by $\alpha$: 
$U^{(\alpha)}(x)=x^{1-\alpha}/(1-\alpha)$ if
$\alpha\neq 1$, and $U^{(\alpha)}(x)=\log(x)$ if $\alpha=1$.
Users make payments for use of the system.  Let $w_\ell$ be the payment
of user $\ell$; the payments determine users' weights in the system.
Formally, suppose the payment vector of users is $\bsy{w}$ and  the allocation 
vector is $\bsy{x}$; then the resource manager solves the following optimization problem:
\[
\max_{\bsy{x}}~ \sum_{\ell = 1}^N w_\ell U^{(\alpha)}(x_\ell)
\quad
s.t.~ \sum_{\ell=1}^N x_\ell \leq 1.
\]
The solution of this problem is $x_\ell=w_\ell^{1/\alpha}/(\sum
w_m^{1/\alpha})$.  A well-known example of an $\alpha$-fair
allocation rule is the {\em proportionally fair} allocation rule,
obtained when $\alpha = 1$ \cite{kelmautan98}: resource is allocated proportional to payment. 
Now, suppose that the $\alpha$-fair pricing rule is used in our model, so that $w_\ell =
\beta_\ell^\alpha$.  Then the $\alpha$-fair allocation rule reduces to the
discriminatory processor sharing policy described above---i.e.,
allocation of server capacity in proportion to the priority levels $\beta_\ell$.


In this paper we will generally be interested in scenarios  where all
jobs of the same class $i$ choose the same priority level.  In an
abuse of notation we denote by $\beta_i$ the priority level chosen by
all class $i$ jobs, and in this case we succinctly denote $(\beta_1, \cdots, \beta_K)$ by
$\bsy{\beta}$.  We refer to $\bsy{\beta}$ as the {\em class priority vector}.
Let $V(\beta; \bsy{\beta})$ be the expected processing time for a job
with priority $\beta$ that arrives to the system in steady state,
with the class priority vector given by $\bsy{\beta}$.  Observe that
with this notation, a class $i$ job with priority level $\beta_i$ has
expected processing time $V(\beta_i; \boldsymbol{\beta})$.  For
convenience we define $W_i(\bsy{\beta}) = V(\beta_i; \bsy{\beta})$.
The total cost of a user is $cV(\beta; \bsy{\beta})+\beta^{\alpha}$,
where $c$ is the user's unit time cost and $\beta$ is its priority level.

We frequently make use of {\em Little's law}, which provides a
relationship between steady state expected processing times and steady
state expected queue lengths \cite{kle75}.  In particular, let $N_i$ denote the steady state
number of class $i$ jobs in the system.  In a system consisting of $K$
classes $(\lambda_i, \beta_i)$, $i = 1,\ldots,K$,
Little's law establishes that in steady state, for every class $i$ we have
$\bE[N_i] = \lambda_i W_i(\bsy{\beta})$.

\subsection{Nash Equilibrium}

We consider two types of games for this system: the {\em job level game} and the
{\em class level game}. 
In the job level game, each {\em job} is an individual
user, aiming to minimize its expected total cost by choosing its
own priority level $\beta$. 
Although jobs from the same class are allowed to choose different
priority levels, because jobs of the same class share the same
parameters {\em ex ante}, we restrict our attention only to {\em symmetric}
equilibria of the resource sharing game; these are equilibria where jobs from
the same class choose the same priority levels.  
Such an equilibrium can be characterized by a class priority vector
$(\beta_1, \cdots, \beta_K)$.

\begin{definition}\label{def:jobeq}
A {\bf job level Nash equilibrium}
consists of a class priority vector $\boldsymbol{\beta}=(\beta_1,
\cdots, \beta_K)$ such that for all $i=1, \cdots, K$,
\begin{equation}\label{eq:def:jobeq}
\beta_i =\arg\min_{\beta>0}\left[c_iV(\beta;
\boldsymbol{\beta})+\beta^{\alpha}\right],\ \forall\ i=1, \cdots, K.
\end{equation}
\end{definition}

In the class level game, each class is regarded as a single user 
and chooses a priority level for all of its jobs;
therefore the equilibrium is again characterized by a class priority
vector.

\begin{definition}\label{def:classeq}
A {\bf class level Nash equilibrium}
consists of a class priority vector $\boldsymbol{\beta}=(\beta_1,
\cdots, \beta_K)$ such that for all $i=1, \cdots, K$,
\beqan
\beta_i = \arg\min_{\beta\geq\underline{\beta}}\ [c_i W_i(\beta_1, \cdots,
\beta_{i-1}, \beta, \beta_{i+1},
\cdots, \beta_K)+\beta^{\alpha}],\ \forall\ i=1, \cdots, K.
\label{eq:def:classeq}
\eeqan
\end{definition}



We emphasize that, although jobs from the same class choose
the same priority in both the symmetric job level equilibrium and
the class level equilibrium, these two equilibria are not identical.
The difference
is that in the class level game, changing the priority level of a whole
class $i$ causes an externality within the class itself, 
while by contrast, in the job level game, a single
job alters its priority level in isolation.
In this paper, we mainly study the job level game, 
but also briefly discuss how our study can be adapted to the class level game
in the Section \ref{sec:dis}.

\subsection{Characterizing Processing Times}

Nash equilibria of both the job level and class level require the
characterization of processing times $V$ and $W_i$,
which is in general quite complex.
For the $K$ class DPS model, Fayolle et al.~\cite{faymitias80} show that
the expected steady state processing time $W_i$ for each class $i$ can
be determined by solving a linear system.

\begin{theorem}\label{thm:wtime} \cite{faymitias80}
In a $K$-class DPS model with class priority vector $\bsy{\beta}$,
$(W_1(\boldsymbol{\beta}), \cdots$ $, W_K(\boldsymbol{\beta}))$ is the unique
solution of the following system of equations:
\begin{equation}\label{eq:wtime}
\mu W_k(\boldsymbol{\beta})-
\sum_{i=1}^K\frac{\lambda_i\beta_i}{\beta_i+\beta_k}[W_k(\boldsymbol{\beta})+W_i(\boldsymbol{\beta})]=1,\
k=1, \cdots, K.
\end{equation}
\end{theorem}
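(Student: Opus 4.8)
The plan is to establish two things separately and then combine them: (i) that the true stationary mean processing times $W_k(\boldsymbol{\beta})$ satisfy the linear system~(\ref{eq:wtime}), and (ii) that this linear system has a unique solution. For part~(i) I would use a tagged-customer argument together with PASTA and Little's law. Tag a class-$k$ job arriving in steady state; since the server works at unit rate whenever it is nonempty, and it is nonempty throughout the tagged job's sojourn, the length of that sojourn equals the total amount of service delivered to all jobs present during it. Writing $W_k = \bE[\sigma] + \sum_i \bE[S_i]$, where $\sigma \sim \text{Exp}(\mu)$ is the tagged job's own requirement and $S_i$ is the service delivered to class-$i$ jobs (other than the tagged one) while it is present, gives $\mu W_k = 1 + \mu \sum_i \bE[S_i]$, which already exhibits the constant $1$ on the right-hand side of~(\ref{eq:wtime}). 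It then remains to show $\mu\,\bE[S_i] = \frac{\lambda_i \beta_i}{\beta_i+\beta_k}(W_k + W_i)$.

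The structural fact I would exploit is the DPS rate-ratio property: whenever the tagged class-$k$ job and a class-$i$ job are simultaneously present, they receive service at instantaneous rates in the fixed ratio $\beta_k : \beta_i$, independent of how many other jobs are present, so, measuring time in units of the tagged job's own accumulated service, a competing class-$i$ job advances at relative rate $\beta_i/\beta_k$. Because service requirements are exponential and hence memoryless, a class-$i$ job sharing the server with the tagged job departs before it with probability $\frac{\mu/\beta_k}{\mu/\beta_k + \mu/\beta_i} = \frac{\beta_i}{\beta_i + \beta_k}$, which is exactly the pairwise weight in~(\ref{eq:wtime}). The class-$i$ jobs that overlap the tagged job split into those present on arrival, of expected number $\bE[N_i] = \lambda_i W_i$ by PASTA and Little's law, and those arriving during the sojourn, of expected number $\lambda_i W_k$, giving the factor $\lambda_i(W_k + W_i)$; each overlapping job contributes in expectation one mean service time $1/\mu$ of stolen capacity conditional on the race outcome, and assembling these pieces yields the desired identity. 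I expect this accounting to be the main obstacle: one must treat carefully the two cases according to whether the competing job or the tagged job departs first, since in the latter case only the service delivered up to the tagged job's departure may be counted, and make rigorous the informal ``one mean service time of interference per overlapping job.'' A fully rigorous alternative, which is the route of~\cite{faymitias80}, introduces the conditional mean sojourn time $T_k(x)$ of a class-$k$ job requiring exactly $x$ units of service, derives integral/differential equations for the $T_k(\cdot)$ by an infinitesimal first-step analysis, shows each $T_k(x)$ is linear in $x$ with slope $\mu W_k$, and integrates against the $\text{Exp}(\mu)$ service law to recover~(\ref{eq:wtime}).

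For part~(ii) I would rewrite~(\ref{eq:wtime}) as $M\boldsymbol{W} = \mathbf{1}$, where $M$ has off-diagonal entries $M_{ki} = -\frac{\lambda_i\beta_i}{\beta_i+\beta_k} \le 0$ and diagonal entries $M_{kk} = \mu - \lambda_k - \sum_{i\neq k}\frac{\lambda_i\beta_i}{\beta_i+\beta_k}$, so that $M$ is a $Z$-matrix. Since $\frac{\beta_i}{\beta_i+\beta_k} < 1$, we get $M_{kk} > \mu - \lambda = \mu(1-\rho) > 0$ under the stability assumption $\rho < 1$. The true mean processing times form a vector $\boldsymbol{W}^{\ast} > 0$ that is finite by ergodicity and, by part~(i), satisfies $M\boldsymbol{W}^{\ast} = \mathbf{1} > 0$. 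A $Z$-matrix that maps some strictly positive vector to a strictly positive vector is a nonsingular $M$-matrix; hence $M$ is invertible and $\boldsymbol{W}^{\ast}$ is the unique solution of~(\ref{eq:wtime}). This step is routine given the explicit sign structure, so the analytic content of the theorem lies entirely in part~(i).
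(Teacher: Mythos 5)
The paper does not actually prove this statement: Theorem~\ref{thm:wtime} is imported verbatim from Fayolle et al.~\cite{faymitias80}, and the only related argument in the appendix is the proof of Theorem~\ref{thm:vmulti}, which proceeds by a first-step recursion for the conditional sojourn time $U(\beta;\boldsymbol{\beta},n_1,\dots,n_K)$ together with a linearity-in-$n_i$ ansatz. So you should be judged against the cited source, and on that basis your argument is correct but follows a genuinely different, more probabilistic route. For the derivation of the equations, instead of the integro-differential analysis of the conditional sojourn times $T_k(x)$ (which you correctly identify as the route of \cite{faymitias80}), you use work conservation --- the tagged job's sojourn equals its own service plus the service delivered to all competitors while it is present --- and then evaluate the interference from each overlapping class-$i$ job via a pairwise exponential race. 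Your accounting checks out exactly: reparametrizing by the tagged job's accumulated service, the service stolen by an overlapping class-$i$ job is $\min\bigl(X_i,(\beta_i/\beta_k)X_k\bigr)$ with $X_i,X_k$ i.i.d.\ exponential of rate $\mu$, whose mean is $\beta_i/\bigl(\mu(\beta_i+\beta_k)\bigr)$, and combined with the expected overlap count $\lambda_i(W_i+W_k)$ from PASTA, Little's law, and Wald's identity this yields precisely the coefficient $\frac{\lambda_i\beta_i}{\beta_i+\beta_k}(W_k+W_i)$; note that your informal phrasing ``winning probability times one full mean service'' happens to give the right value only because $\bE[\min(X,Y)]=\bP(X<Y)\,\bE[X]$ for independent exponentials, so the rigorous version should be stated in terms of the minimum as above. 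You are right that the Wald/Campbell steps (dependence of the overlap count on the race outcomes, arrivals during a sojourn whose length they influence) are the delicate points, and flagging the $T_k(x)$ route as the fully rigorous fallback is appropriate. Your uniqueness argument is a clean addition that the paper does not supply at all: writing the system as $M\boldsymbol{W}=\mathbf{1}$, checking $M_{kk}>\mu(1-\rho)>0$ and $M_{ki}\le 0$, and invoking the characterization that a $Z$-matrix mapping a strictly positive vector to a strictly positive vector is a nonsingular $M$-matrix is correct (via $Bx<sx$, $x>0$ implying $\rho(B)<s$), and it is not circular since part~(i) supplies the positive solution needed. What your approach buys is an elementary, self-contained derivation exposing the probabilistic meaning of the weights $\beta_i/(\beta_i+\beta_k)$ as race-winning probabilities; what the original transform/differential-equation approach buys is rigor without case-by-case stochastic accounting and extensibility beyond exponential service distributions.
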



On the other hand, computing the processing time $V(\beta; \bsy{\beta})$ for general $\beta$
can be reduced to computing the processing time
$W_i(\bsy{\beta}), i=1,\cdots, K$ as stated by the following theorem.

\begin{theorem}\label{thm:vmulti}
Let $N_i$ be the steady state number of class $i$ jobs in a $K$-class DPS system
with class priority vector $\bsy{\beta}$.  Then the steady state
processing time of a job with priority $\beta$ is
\beqan\label{eq:vmulti} V(\beta; 
\boldsymbol{\beta})=U_0(\beta;
\boldsymbol{\beta})+\sum_{i=1}^KU_i(\beta;
\boldsymbol{\beta}) \bE[N_i], \eeqan
where
\begin{equation} \label{eq:uk}
U_i(\beta; \boldsymbol{\beta})=\frac{\beta_i}{\beta_i+\beta}U_0(\beta; \boldsymbol{\beta}),\ i=1, \cdots, K;
\ \text{and}\ 
U_0(\beta; \boldsymbol{\beta})=\left[\mu-\sum_{i=1}^K\frac{\lambda_i\beta_i}{\beta_i+\beta}\right]^{-1}
\end{equation}
\end{theorem}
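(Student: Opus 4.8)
The plan is to realize the tagged job of priority $\beta$ as a fictitious additional class and then invoke the linear characterization of Theorem~\ref{thm:wtime} for the enlarged system. Concretely, I would introduce an auxiliary ``class $0$'' consisting of jobs that arrive as a Poisson process with rate $\lambda_0$ and all choose priority $\beta_0=\beta$, leaving the original $K$ classes untouched. For every sufficiently small $\lambda_0>0$ the augmented system has load $\rho+\lambda_0/\mu<1$ and is therefore stable, so Theorem~\ref{thm:wtime} applies to the $(K+1)$-class DPS system and yields a unique vector $(W_0,W_1,\dots,W_K)$ satisfying the analogue of \eqref{eq:wtime} with the index running from $0$ to $K$.

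First I would write down the $k=0$ equation of this enlarged system,
\[
\mu W_0 - \sum_{i=0}^K \frac{\lambda_i\beta_i}{\beta_i+\beta}\,[W_0+W_i] = 1,
\]
and observe that the $i=0$ summand equals $\lambda_0 W_0$ and hence vanishes as $\lambda_0\to 0$. Solving the remaining relation for $W_0$ gives
\[
W_0\left[\mu - \sum_{i=1}^K \frac{\lambda_i\beta_i}{\beta_i+\beta}\right] = 1 + \sum_{i=1}^K \frac{\lambda_i\beta_i}{\beta_i+\beta}\,W_i,
\]
so that $W_0 = U_0(\beta;\boldsymbol{\beta})\bigl[\,1+\sum_{i=1}^K \tfrac{\lambda_i\beta_i}{\beta_i+\beta}W_i\,\bigr]$, with $U_0$ exactly as in \eqref{eq:uk}.

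The next step is to pass to the limit $\lambda_0\to 0$. Because the coefficient matrix of the enlarged linear system depends continuously on $\lambda_0$ and is invertible throughout (uniqueness in Theorem~\ref{thm:wtime}), its solution is continuous in $\lambda_0$; at $\lambda_0=0$ the equations for $k=1,\dots,K$ decouple from $W_0$ and reduce to the original system \eqref{eq:wtime}, so $W_i\to W_i(\boldsymbol{\beta})$. Simultaneously, the limiting value of $W_0$ is precisely the expected sojourn time of a single tagged job of priority $\beta$ that, by PASTA, sees the $K$-class system in steady state; that is, $W_0\to V(\beta;\boldsymbol{\beta})$. Finally I would substitute Little's law $\bE[N_i]=\lambda_i W_i(\boldsymbol{\beta})$ into the sum, which converts $\tfrac{\lambda_i\beta_i}{\beta_i+\beta}W_i$ into $\tfrac{\beta_i}{\beta_i+\beta}\bE[N_i]$, and recognize $U_i=\tfrac{\beta_i}{\beta_i+\beta}U_0$ from \eqref{eq:uk}, yielding \eqref{eq:vmulti}.

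The main obstacle is the limiting step: one must argue that inserting a vanishing-rate tagged class and letting its rate tend to zero genuinely recovers the sojourn time of an isolated tagged job, while leaving the incumbent classes' sojourn times (and hence, via Little's law, their mean queue lengths $\bE[N_i]$) unchanged. This is a continuity and consistency argument rather than a computation, and it is where care is needed; everything after it is algebraic rearrangement together with Little's law.
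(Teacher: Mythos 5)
Your route is genuinely different from the paper's. The paper proves the result from scratch by first‑step analysis: it defines $U(\beta;\boldsymbol{\beta},n_1,\dots,n_K)$, the conditional expected remaining sojourn time of the tagged job given the current population vector, writes the one‑step recursion \eqref{eq:recursion}, argues (following Haviv and van der Wal) that $U$ is affine in $(n_1,\dots,n_K)$, and matches coefficients to obtain \eqref{eq:uk}; the theorem then follows by averaging over the stationary population seen at arrival. You instead realize the tagged job as a vanishing class $0$ and extract everything from the cited linear system of Theorem~\ref{thm:wtime}. Your algebra is correct: the $i=0$ summand of the $k=0$ equation is exactly $\lambda_0 W_0$; the $(K+1)\times(K+1)$ coefficient matrix is invertible at $\lambda_0=0$ (it becomes block triangular, with the original invertible $K\times K$ block and diagonal entry $\mu-\sum_{i}\lambda_i\beta_i/(\beta_i+\beta)>\mu-\lambda>0$ for $W_0$), so the solution is continuous in $\lambda_0$ and the limit of $W_0$ is $U_0\bigl[1+\sum_{i}\tfrac{\lambda_i\beta_i}{\beta_i+\beta}W_i(\boldsymbol{\beta})\bigr]$, which Little's law converts into \eqref{eq:vmulti}. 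What your approach buys is that it avoids re‑deriving the affine structure of the conditional sojourn time; what it costs is that the entire probabilistic content of the theorem is compressed into the single identification $\lim_{\lambda_0\to 0}W_0(\lambda_0)=V(\beta;\boldsymbol{\beta})$, which you assert but do not prove.

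That identification is true but is not a formality, and as written your proof is incomplete at exactly the point you flag. To close it you need three estimates: (i) by Little's law $\bE[N_0]=\lambda_0W_0(\lambda_0)=O(\lambda_0)$, so the probability that an arriving class‑$0$ job (which by PASTA sees the stationary state) finds another class‑$0$ job present is $O(\lambda_0)$; (ii) the expected number of class‑$0$ arrivals during the tagged job's sojourn is $O(\lambda_0)$, because every sojourn is stochastically dominated by an M/M/1 busy period at load $\rho+\lambda_0/\mu$, which is bounded away from $1$ for small $\lambda_0$ and supplies the uniform integrability needed to pass limits inside expectations; and (iii) the stationary law of $(N_1,\dots,N_K)$ in the augmented system converges, with first moments, to that of the $K$‑class system, so that averaging the conditional sojourn time over the state seen at arrival converges to the defining expression for $V(\beta;\boldsymbol{\beta})$. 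With (i)--(iii) supplied, your argument is a valid and arguably more economical alternative to the paper's; without them, the step ``insert a class of rate $\lambda_0$ and let $\lambda_0\to 0$'' is precisely where a proof of this kind could in principle break, so it must be argued rather than asserted.
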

The values of $\bE[N_i]$ can be obtained by applying
Little's law to the solution of the system of linear equations (\ref{eq:wtime}). 
We conclude, therefore, that solving for $V(\beta; \boldsymbol{\beta})$
in (\ref{eq:vmulti}) can be reduced to computing $W_i(\bsy{\beta})$.
In general, explicitly solving (\ref{eq:wtime}) requires the inversion
of a $K\times K$ matrix with complexity $O(K^3)$, and hence, there is no closed form expression for $W_i$ or $V$.

Nevertheless, when $K=1$ or $K=2$, we are able to solve for $W_i$ and $V$ in closed form. 
The solution of (\ref{eq:vmulti}) with $K=1$ is first established in \cite{havwal97,hashav03} as follows
\begin{equation}\label{eq:vtwo2}
V(\beta;
\hat{\beta})=\frac{1}{\mu(1-\rho)}\cdot\frac{\beta(1-\rho)+\hat{\beta}}{\hat{\beta}(1-\rho)+\beta}.
\end{equation}
When $K=2$, the solution for $W_i$ is given by \cite{faymitias80},
and the solution for $V$ directly follows. 
Both solutions are lengthy and omitted for brevity.

\subsection{Existence of NE}

Existence of Nash equilibrium 
can be guaranteed when $\alpha\geq 1$, by exploiting
convexity of the job cost function in \eqref{eq:def:jobeq}. 
When $\alpha<1$, the payment term $\beta^{\alpha}$ is strictly concave, 
therefore the convexity of the objective function is not guaranteed.
Although analytically establishing existence of Nash equilibrium in this regime remains an open question,
our numerical computation with best response dynamics converges to a
  Nash equilibrium 
even when $\alpha<1$.

\begin{theorem}\label{thm:ne_existence}
There exists
a Nash equilibrium for the job level game when $\alpha \geq 1$.
\end{theorem}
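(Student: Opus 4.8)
The plan is to realize a job level Nash equilibrium as a fixed point of the best response map and invoke Brouwer's fixed point theorem. For a fixed aggregate class priority vector $\bsy{\beta}$, a single class $i$ job treats $\bsy{\beta}$ (and hence the occupancies $\bE[N_i]$) as given and solves $\min_{\beta>0}[c_i V(\beta;\bsy{\beta})+\beta^{\alpha}]$; call the minimizer $B_i(\bsy{\beta})$. A class priority vector is a job level Nash equilibrium precisely when $\beta_i=B_i(\bsy{\beta})$ for every $i$, i.e.\ when $\bsy{\beta}$ is a fixed point of $B=(B_1,\dots,B_K)$. So it suffices to exhibit a compact convex set that $B$ maps continuously into itself.

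The first key step is to show that the per-job objective is strictly convex in $\beta$ when $\alpha\ge 1$, which makes $B_i$ single valued. Writing $V(\beta;\bsy{\beta})=U_0(\beta;\bsy{\beta})\,h(\beta)$ with $h(\beta)=1+\sum_i \tfrac{\beta_i}{\beta_i+\beta}\bE[N_i]$, I would verify from \eqref{eq:uk} that, with $\bsy{\beta}$ held fixed, both factors are positive, strictly decreasing, and convex in $\beta$: for $U_0$ this follows because its reciprocal $\mu-\sum_i \tfrac{\lambda_i\beta_i}{\beta_i+\beta}$ is positive (as $\rho<1$), increasing, and concave, while $h$ is a nonnegative combination of the convex decreasing maps $\beta\mapsto 1/(\beta_i+\beta)$. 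The product of two positive, decreasing, convex functions is convex, since in $(U_0 h)''=U_0''h+2U_0'h'+U_0 h''$ every term is nonnegative (the cross term is a product of two negative derivatives); hence $V(\cdot;\bsy{\beta})$ is convex. Adding the payment $\beta^{\alpha}$, which is convex exactly because $\alpha\ge 1$, yields a strictly convex objective with a unique minimizer.

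Next I would pin down a compact invariant domain and establish continuity. Since $V(\cdot;\bsy{\beta})\ge 1/\mu$ and the payment $\beta^{\alpha}\to\infty$, comparing the cost at a fixed reference priority shows every best response is bounded above by some $\bar b$ independent of $\bsy{\beta}$; together with a lower bound this confines $B$ to a box $[\underline b,\bar b]^K$. Continuity of $B$ reduces to continuity of $\bsy{\beta}\mapsto V(\beta;\bsy{\beta})$, which holds because the coefficients $\lambda_i\beta_i/(\beta_i+\beta_k)$ in the linear system \eqref{eq:wtime} are continuous and the system is uniquely solvable by Theorem~\ref{thm:wtime}, so the $W_i$---and by Little's law the $\bE[N_i]$---depend continuously on $\bsy{\beta}$ on the box; uniqueness of the minimizer then upgrades the maximum theorem to continuity of each $B_i$. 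Brouwer's theorem applied to the continuous self-map $B$ on $[\underline b,\bar b]^K$ produces the desired fixed point.

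The main obstacle is the boundary analysis needed to make this argument clean. I expect the delicate points to be (i) obtaining the uniform bounds $\underline b,\bar b$---in particular a uniform upper bound on the total occupancy $\sum_i\bE[N_i]=\sum_i\lambda_i W_i(\bsy{\beta})$ as priorities approach the edges of the box, where the linear system degenerates---and (ii) ensuring the minimizer stays in the interior $\beta>0$ rather than escaping to the boundary. The latter is automatic for $\alpha>1$, where $\tfrac{d}{d\beta}\beta^{\alpha}\to 0$ as $\beta\to 0$ forces the finite, strictly negative marginal delay benefit to dominate near zero; the borderline case $\alpha=1$, in which the marginal payment does not vanish at the origin, is the one to watch and is presumably what restricts the clean existence statement to $\alpha\ge 1$.
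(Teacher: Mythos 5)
Your argument is essentially the paper's: the same decomposition of $V(\beta;\bsy{\beta})$ into a positive decreasing convex factor $U_0=1/g$ with $g(\beta)=\mu-\sum_i\lambda_i\beta_i/(\beta_i+\beta)$ times the positive decreasing convex factor $h(\beta)=1+\sum_i\bE[N_i]\beta_i/(\beta_i+\beta)$ (the paper writes this as $f/g$ and expands $(f/g)''$ directly, which is the same computation), followed by convexity of $\beta^{\alpha}$ for $\alpha\geq 1$, compactification of the strategy space, and a fixed-point theorem (the paper invokes Rosen's existence theorem rather than Brouwer on the best-response map, but these are interchangeable here). The one obstacle you flag as delicate, the uniform upper bound $\bar b$, has a clean resolution that is the paper's opening step: DPS is work-conserving, so $V(\beta;\bsy{\beta})\leq 1/(\mu(1-\rho))$ for \emph{every} $\beta$ and $\bsy{\beta}$ (the busy period of the corresponding M/M/1 queue), which immediately bounds both the occupancies $\sum_i\bE[N_i]=\sum_i\lambda_iW_i(\bsy{\beta})\leq\rho/(1-\rho)$ and, by comparison with a fixed reference priority, the best response. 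Your second concern --- that for $\alpha=1$ the unconstrained minimizer over $\beta>0$ could escape toward $0$ when the lower constraint $\underline{\beta}$ binds --- is a genuine subtlety that the paper's proof also passes over silently when it restricts to $[\underline{\beta},\overline{\beta}]$, so you are not missing anything the paper supplies.
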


As usual, this existence result is nonconstructive, since it uses a
fixed point theorem.  In general, given the implicit equations that
define the processing times in \eqref{eq:wtime}, there is no closed form characterization of
the Nash equilibrium, and no tractable approach for computation
is available.
Although we could resort to some heuristics (e.g., best response 
dynamics) to approach NE, each step of such an algorithm requires
computing a range of processing times with fixed parameters, and as
established above each such computation has complexity $O(K^3)$.
Further, there is no theoretical guarantee that such dynamics will converge.
(Though we note, numerical computation suggests that the best
  response dynamics does converge.)
Equilibrium computation is therefore not possible in closed form
in general; as a result, we are left with essentially no structural
insight into the behavior of players in the game.


\section{Heavy Traffic Approximation}\label{sec:joblevel}

In the remainder of the paper we consider an alternate approach to the
equilibrium analysis, by approximating the processing time.
We aim to overcome the complexity of
computing the processing times by exploiting a {\em heavy traffic}
approximation, i.e., an approximation where the load approaches
service capacity.  Such an approximation is relevant for large
systems such as cloud computing services, where providers will
typically not want to provision significant excesses of capacity.
\blue{\footnote{One such justification  for heavy traffic capacity provisioning 
comes from Nair et al. \cite{nai11},
who study optimal capacity provisioning for online service providers.
They find that as the market size becomes large,
heavy traffic emerges as a consequence of a profit maximizing strategy
for the service provider, with exact scaling depending on the strength
of positive externalities among users.}}


\subsection{Approximating the Processing Time}
\label{sec:heavy}

In heavy traffic, a phenomenon known as {\em state space collapse}
gives us a simplified solution for the steady state 
distribution of the system \cite{rei84}; informally, state space
collapse refers to the fact that the numbers of jobs of each class in the system become
perfectly correlated when the system is heavily loaded.

In a slight abuse, whenever
we write $\rho\rightarrow1$, we mean that we consider a sequence of
systems such that $(\rho_1, \cdots, \rho_K)$ 
converges to some $(\overline{\rho}_1, \cdots, \overline{\rho}_K)$ with 
$\sum_{i=1}^K\overline{\rho}_i = 1$. Moreover, we
emphasize that both $V(\beta; \bsy{\beta})$ and $W_i(\bsy{\beta})$
depend on $\bsy{\rho}$, though we suppress this dependence for notational brevity.
Let $N_i$ denote the steady state number of type $i$ jobs in the system. 
Then we have the following result on the joint
steady state distribution of $(N_1, \cdots, N_K)$ for a DPS system in
heavy traffic.
\begin{theorem}\label{thm:heavy}\cite{regsen96}
Let $N_i$ be the steady state number of class $i$ jobs in a $K$-class 
DPS system with class priority vector $\bsy{\beta}$. Then
as $\rho \to 1$,  we have
\begin{equation}\label{eq:heavy}
(1-\rho)(N_1, \cdots, N_K)\stackrel{d.}{\rightarrow}
Z\cdot\left(\frac{\overline{\rho}_1}{\beta_1}, \cdots,
\frac{\overline{\rho}_K}{\beta_K}\right),
\end{equation}
where ``$\stackrel{d.}{\rightarrow}$'' denotes convergence in distribution, 
and $Z$ is an exponentially distributed random variable with
mean $1/\overline{\gamma}(\bsy{\beta})$ where $\overline{\gamma}(\bsy{\beta})=\sum_{i=1}^K\overline{\rho}_i/\beta_i$.
\end{theorem}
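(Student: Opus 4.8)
The plan is to decompose the limit into a \emph{scalar} part, which fixes the magnitude and the exponential law, and a \emph{directional} part, the state space collapse, which fixes the ray. The scalar part is essentially free here: because all $K$ classes share the common service rate $\mu$ and DPS is work conserving, the total job count $N=\sum_{i=1}^K N_i$ evolves exactly as the queue length of an ordinary $M/M/1$ queue with arrival rate $\lambda$ and service rate $\mu$ (the superposed arrivals form a Poisson process of rate $\lambda$, and the single server completes work at rate $\mu$ whenever $N\geq 1$, regardless of how capacity is split across classes). Hence $N$ is geometric in steady state, and the classical scaling of the geometric distribution gives $(1-\rho)N \stackrel{d.}{\rightarrow} \mathrm{Exp}(1)$. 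It then remains to show that the direction $(N_1,\dots,N_K)/N$ concentrates, as $\rho\to1$, on the fixed ray $\kappa=(\overline{\rho}_1/\beta_1,\dots,\overline{\rho}_K/\beta_K)$; granting this, writing $(1-\rho)N_i=(N_i/N)\,(1-\rho)N$ and setting $Z=\mathrm{Exp}(1)/\overline{\gamma}$, which has mean $1/\overline{\gamma}$, yields exactly \eqref{eq:heavy}.

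Next I would pin down why $\kappa$ is the correct direction, by examining the critically loaded ($\rho=1$) fluid model. On a ray $n_i=\kappa_i q$ the total weighted size is $\sum_j\beta_j n_j=q\sum_j\overline{\rho}_j=q$, so under DPS the fraction of capacity delivered to class $i$ is $\beta_i\kappa_i/\sum_j\beta_j\kappa_j=\overline{\rho}_i$, and its completion rate is $\mu\,\overline{\rho}_i=\lambda_i$. Thus $\kappa$ is precisely the direction, unique up to scale, along which arrivals and departures balance class by class and the fluid trajectory is invariant; this is the geometric content of state space collapse. The same computation supplies the normalization: summing $(1-\rho)N_i\to Z\kappa_i$ against the weights $\beta_i$ recovers $(1-\rho)\sum_i\beta_i N_i\to Z$, while summing against unit weights gives $(1-\rho)N\to Z\,\overline{\gamma}$, consistent with $(1-\rho)N\to\mathrm{Exp}(1)$ and $\bE[Z]=1/\overline{\gamma}$.

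The crux, and the step I expect to be the main obstacle, is the state space collapse itself: a rigorous proof that $(N_1,\dots,N_K)/N$ concentrates on $\kappa$ as $\rho\to1$. I would attack this through a separation of time scales: the component of the scaled state transverse to $\kappa$ feels an $O(1)$ restoring drift back to the ray, whereas the on-ray scalar evolves on the slow $O((1-\rho)^{-2})$ diffusive time scale, so on the diffusion scale the transverse fluctuations are negligible. Making this precise requires a Lyapunov/drift estimate transverse to $\kappa$ together with a tightness argument, in the spirit of the heavy-traffic invariance principles of Reiman and of Bramson--Williams, and is exactly the technical work carried out in \cite{regsen96} via direct analysis of the stationary distribution. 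As an independent consistency check I would verify the claim at the level of first moments using \eqref{eq:wtime}: multiplying each equation by $(1-\rho)$ and letting $\rho\to1$, so that $\lambda_i\to\overline{\rho}_i\mu$ and $w_k:=\lim(1-\rho)W_k$, yields the homogeneous system $w_k=\sum_{i}\tfrac{\overline{\rho}_i\beta_i}{\beta_i+\beta_k}(w_k+w_i)$, which one checks is solved by $w_k\propto 1/\beta_k$ (the factor $\beta_i+\beta_k$ cancels and $\sum_i\overline{\rho}_i=1$). Via Little's law $\bE[N_i]=\lambda_iW_i$ this gives $(1-\rho)\bE[N_i]\to \text{const}\cdot\overline{\rho}_i/\beta_i$, matching the mean of the right-hand side of \eqref{eq:heavy}. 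Note, however, that the leading-order system fixes only the \emph{direction} $w_k\propto1/\beta_k$ and leaves the constant undetermined; the constant, and more importantly the exponential \emph{shape} of the limit, come from the scalar $M/M/1$ identity rather than from \eqref{eq:wtime}. Since the exponential law is moment-determinate, one could in principle upgrade this first-moment check to full distributional convergence by controlling all scaled moments and invoking tightness, but the clean route remains the scalar reduction together with state space collapse.
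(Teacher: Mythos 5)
First, note that the paper does not actually prove this statement: Theorem~\ref{thm:heavy} is imported verbatim from Rege and Sengupta \cite{regsen96}, and no proof of it appears in the appendix, so there is no internal argument to compare against. Judged on its own terms, your sketch gets the surrounding structure right: because all classes share the service rate $\mu$ and DPS is work-conserving, the total count $N=\sum_i N_i$ is indeed the queue length of an $M/M/1$ queue, so $(1-\rho)N$ converges to a unit exponential; the ray $\kappa_i=\overline{\rho}_i/\beta_i$ is the invariant direction of the critical fluid model (your computation that the capacity share on this ray is $\beta_i\kappa_i=\overline{\rho}_i$, hence completion rate $\lambda_i$, is correct); the normalization $Z=\mathrm{Exp}(1)/\overline{\gamma}(\bsy{\beta})$ follows from $\sum_i\kappa_i=\overline{\gamma}(\bsy{\beta})$; and the first-moment consistency check via \eqref{eq:wtime} is valid (the factor $\beta_i+\beta_k$ cancels and $\sum_i\overline{\rho}_i=1$, so $w_k\propto 1/\beta_k$ solves the limiting homogeneous system).

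The genuine gap is the one you yourself flag: the state space collapse, i.e., that under the stationary distributions the direction $(N_1,\cdots,N_K)/N$ concentrates on $\kappa$ as $\rho\to 1$ in probability, which is what would let Slutsky's theorem upgrade the scalar $M/M/1$ limit to the joint limit \eqref{eq:heavy}. Everything else in the theorem is elementary once this is granted, so this step is not a technical afterthought --- it \emph{is} the theorem. Your proposed route (an $O(1)$ restoring drift transverse to $\kappa$ against an $O((1-\rho)^{-2})$ diffusive time scale along it, plus tightness) is a plausible program in the Reiman/Bramson--Williams style, but none of it is executed: no Lyapunov function is exhibited, no transverse drift inequality is proved, and no tightness estimate for the family of stationary distributions is given. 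The moment check controls only $\bE[N_i]$ and cannot by itself exclude order-one fluctuations of the direction on the diffusive scale. For the record, \cite{regsen96} do not argue via fluid or diffusion limits at all; they work directly with the stationary distribution and show that all scaled joint moments of $(N_1,\cdots,N_K)$ converge to those of $Z\cdot\kappa$, which determines the exponential limit --- so even a completed version of your argument would be a genuinely different proof from the one the paper cites. As written, the proposal correctly reduces the theorem to its hardest ingredient and then defers that ingredient back to the reference.
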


Convergence of the joint distribution implies convergence of
marginal distributions, so
$(1-\rho)N_i\stackrel{d.}{\rightarrow} Z\overline{\rho}_i/\beta_i$
for each $i$. Moreover, the second moment of $(1-\rho)N_i$ is shown to be
uniformly bounded \cite{regsen96}, so the $N_i$'s are uniformly integrable.
It follows from \cite{bil87} that in this case convergence in distribution implies convergence in the mean, 
and hence,
\begin{equation}\label{eq:eni}
(1-\rho)\bE[N_i]\rightarrow
\bE[Z]\frac{\overline{\rho}_i}{\beta_i}=\frac{\overline{\rho}_i}{\beta_i\overline{\gamma}(\bsy{\beta})} \quad \mbox{as} \quad \rho \ra 1.
\end{equation}
Taking advantage of this approximation of $\bE[N_i]$, we are now able to
approximate $V(\beta;
\boldsymbol{\beta})$. 
Note that 
\[
\lim_{\rho\ra1}U_0(\beta; \bsy{\beta})=\left[\mu\left(1-\sum_{i=1}^K\frac{\overline{\rho}_i\beta_i}{\beta_i+\beta}\right)\right]^{-1}=\left(\mu\beta\sum_{i=1}^K\frac{\overline{\rho}_i}{\beta_i+\beta}\right)^{-1}
\]
is finite, so substituting (\ref{eq:uk}) and (\ref{eq:eni}) into
(\ref{eq:vmulti}) yields
\begin{align}
\lim_{\rho\rightarrow 1}(1-\rho) V(\beta;\boldsymbol{\beta})
&=\left(\lim_{\rho\ra 1}U_0(\beta;\bsy{\beta})\right)\left(\sum_{i=1}^K\frac{\beta_i}{\beta_i+\beta}\lim_{\rho\ra 1}(1-\rho)\bE[N_i]\right)
=\frac{1}{\mu\beta\overline{\gamma}(\bsy{\beta})}.
\label{eq:apporx:vbeta}
\end{align}
In the light of the above approximation, we have the following definition.
\begin{definition}
\label{def:ht_wait_time}
The {\bf heavy traffic processing time} for a job with priority level
$\beta$ in a system with $K$ classes with class priority vector
$\bsy{\beta}$ is defined as
\begin{equation}\label{eq:vht}
V^{HT}(\beta;\bsy{\beta})=\frac{1}{(1-\rho)}\cdot\frac{1}{\mu\beta
  \gamma(\bsy{\beta})}, \quad\mbox{where}\quad \gamma(\bsy{\beta}) = \frac{1}{\rho}\sum_{i=1}^K\frac{\rho_i}{\beta_i}.
\end{equation}
\end{definition}
%
%
We note that $V^{HT}(\beta;\bsy{\beta})$ has a closed form, and is easy 
to compute. Moreover, it is asymptotically exact in the heavy traffic regime: it is straightforward to show that as $\rho \ra 1$, $\gamma(\bsy{\beta}) \ra \overline{\gamma}(\bsy{\beta})$, and hence, 
$(1-\rho)[V^{HT}(\beta; \bsy{\beta})-V(\beta;\bsy{\beta})]\ra 0$.

We note here that one reason we consider the case where $\mu_i =
  \mu$ for all $i$ is that in the absence of this assumption, a similar result to
  Theorem \ref{thm:vmulti} becomes more challenging (in particular,
  because \eqref{eq:recursion} in the appendix is no longer
  tractable).  However, an appropriate generalization of Theorem \ref{thm:heavy}
  holds even for heterogeneous $\mu_i$, and based on this fact we
conjecture that the analysis of this paper can be carried out even
with heterogeneity of $\mu_i$.  We leave this for future work.

\subsection {Heavy Traffic Equilibrium}


For general $K$ it is quite hard to solve for pure Nash equilibrium
because: (i) computing $V(\beta; \boldsymbol{\beta})$ 
requires matrix inversion to solve the linear system
(\ref{eq:wtime}), which can only be done numerically; and (ii) even if
we are able to solve $V(\beta;\boldsymbol{\beta})$ numerically and
obtain optimality conditions for each player (which cannot be done in
closed form), we would still need to solve
a generally {\em nonlinear} system with $K$ equations and
$K$ unknowns to compute the Nash equilibrium.

In this section,  we propose a novel
concept of equilibrium which can be used to approximate the Nash
equilibrium, yet can be computed in closed form.  We approximate
$V(\beta;\bsy{\beta})$ by $V^{HT}(\beta;\bsy{\beta})$ in the objective
function, and based on this approximation we define a concept of equilibrium that
we call {\em heavy traffic equilibrium} (HTE) for job level games, as follows.

\begin{definition}\label{def:approxe}
A {\bf heavy traffic equilibrium} of the game
consists of a set of priorities
$\bsy{\beta}=(\beta_1,\cdots,
\beta_K)$ such that
\[
\beta_i=\arg\min_{\beta>0}\left(c_iV^{HT}(\beta;\bsy{\beta})+\beta^{\alpha}\right), ~ i=1, \cdots, K.
\]
\end{definition}

We can explicitly compute the heavy traffic equilibrium.
\begin{theorem}
\label{thm:ht_equi}
A heavy-traffic equilibrium always exists, and it is unique. 
Moreover, it can be calculated in closed form:
\begin{equation}\label{eq:opt_beta_tilde}
\beta_i=
c_i^{\frac{1}{\alpha+1}} [\alpha(1-\rho)\rho^{-1}S_1]^{-\frac{1}{\alpha}},
\end{equation}
where $S_1=\sum_{i=1}^K \lambda_i c_i^{-\frac{1}{\alpha+1}}$.
\end{theorem}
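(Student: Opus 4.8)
The plan is to exploit the fact that in the heavy traffic objective the coupling among classes enters only through the single scalar $\gamma(\bsy{\beta})$, which does \emph{not} depend on the priority $\beta$ chosen by an individual deviating job. Consequently the best response of a class $i$ job decouples into a one dimensional minimization, and the fixed point condition collapses to a single scalar equation that can be solved explicitly.

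\emph{Step 1: the best response is unique.} Fixing $\bsy{\beta}$, and hence $\gamma(\bsy{\beta})$, the objective for a class $i$ job is
\[
f_i(\beta) = c_i V^{HT}(\beta;\bsy{\beta}) + \beta^\alpha = \frac{A_i}{\beta} + \beta^\alpha, \qquad A_i = \frac{c_i}{(1-\rho)\mu\gamma(\bsy{\beta})}.
\]
I would first record that $A_i > 0$ and that $f_i(\beta) \to +\infty$ both as $\beta \to 0^+$ and as $\beta \to \infty$, so a minimizer exists in $(0,\infty)$. The first order condition $f_i'(\beta) = -A_i\beta^{-2} + \alpha\beta^{\alpha-1} = 0$ is equivalent to $\alpha\beta^{\alpha+1} = A_i$; since $\beta \mapsto \beta^{\alpha+1}$ is strictly increasing on $(0,\infty)$ for any $\alpha > 0$, this has the unique root $\beta = (A_i/\alpha)^{1/(\alpha+1)}$, which by the boundary behavior is the unique global minimizer. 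The one place care is needed is to stress that \emph{this holds for every $\alpha > 0$}, even when $\alpha < 1$ and $\beta^\alpha$ is concave so that $f_i$ need not be convex: uniqueness comes not from convexity but from strict monotonicity of the first order condition together with coercivity of $f_i$ at both ends. This is precisely why the HTE is better behaved than the exact Nash equilibrium, whose existence we could only guarantee for $\alpha \geq 1$.

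\emph{Step 2: reduce the fixed point to a scalar equation.} A vector $\bsy{\beta}$ is an HTE precisely when $\beta_i = (A_i/\alpha)^{1/(\alpha+1)}$ for every $i$. Since the factor $[(1-\rho)\mu\gamma(\bsy{\beta})]^{-1}$ in $A_i$ is common to all classes, every HTE must satisfy $\beta_i = c_i^{1/(\alpha+1)} D$ with the \emph{same} constant $D = [\alpha(1-\rho)\mu\gamma(\bsy{\beta})]^{-1/(\alpha+1)}$ for all $i$; in particular the ratios $\beta_i/\beta_j$ are pinned down by $(c_i/c_j)^{1/(\alpha+1)}$. Substituting this form back into $\gamma(\bsy{\beta}) = \rho^{-1}\sum_i \rho_i/\beta_i$ and using $\rho_i = \lambda_i/\mu$ gives $\gamma(\bsy{\beta}) = S_1/(\rho\mu D)$ with $S_1 = \sum_i \lambda_i c_i^{-1/(\alpha+1)}$. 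Feeding this into the definition of $D$ collapses everything to a single equation in $D$ that simplifies to $D^\alpha = \rho/[\alpha(1-\rho)S_1]$, whose unique positive solution is $D = [\alpha(1-\rho)\rho^{-1}S_1]^{-1/\alpha}$, giving $\beta_i = c_i^{1/(\alpha+1)}[\alpha(1-\rho)\rho^{-1}S_1]^{-1/\alpha}$ as claimed.

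\emph{Existence and uniqueness.} Existence follows because the scalar equation for $D$ always has a positive root, so the displayed vector is a genuine simultaneous best response; uniqueness follows because every HTE is forced into the form $\beta_i = c_i^{1/(\alpha+1)}D$ and that equation admits only one positive $D$. The main obstacle is conceptual rather than computational: one must verify that the individual best response is well defined and unique without appealing to convexity in the regime $\alpha < 1$, and must justify treating $\gamma(\bsy{\beta})$ as fixed under a unilateral deviation — the latter being exactly the heavy traffic approximation in which a single job has negligible influence on the aggregate $\gamma$.
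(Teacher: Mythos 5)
Your proposal is correct and follows essentially the same route as the paper's proof: solve the decoupled one-dimensional best response via the first-order condition $\alpha\beta^{\alpha+1}=A_i$, observe that all $\beta_i$ are proportional to $c_i^{1/(\alpha+1)}$, and close the fixed point through a single scalar equation for $\gamma(\bsy{\beta})$ (your $D$). The only difference is cosmetic: the paper certifies the unique minimizer by checking the second derivative at the critical point, whereas you use coercivity at $0$ and $\infty$ plus monotonicity of the first-order condition, which is a slightly more careful way to handle the non-convex regime $\alpha<1$; also note that $\gamma(\bsy{\beta})$ being unaffected by a unilateral deviation needs no separate justification, since it is built into Definition~\ref{def:approxe} of the job-level HTE.
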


We have two remarks on this result.
First, this closed form expression allows us to carry out  analysis
of sensitivity, efficiency, and revenue of the HTE (see Section~\ref{sec:insights}).
Second, the HTE is easily computable with complexity $O(K)$.
In comparison, the complexity for computing the exact processing time
with fixed parameters is $O(K^3)$, 
and as discussed computing exact NE is intractable.

We have observed above that the difference between the heavy traffic processing time
and the exact processing time approaches zero as $\rho\to 1$, when
scaled by a factor $1-\rho$.  Using this approximation, we can also
prove an approximation theorem for the heavy traffic equilibrium:
we show that deviating by any constant factor from the HTE 
is not profitable as $\rho\ra 1$.

\begin{theorem}\label{thm:epsilon_approx}
Consider a sequence of systems indexed by $n$ such that classes have the same service capacity $\mu$,
and the loads of the systems $\rho^{(n)}\rightarrow 1$ as $n\rightarrow \infty$. 
Let $\boldsymbol{\beta}^{(n)}$ be the unique HTE of the $n$-th system, then for any $\delta\geq 0$,
\beqn
\lim_{n\rightarrow\infty} (1-\rho^{(n)})
\Big[ c_iV_{(n)}\left(\beta_i^{(n)}; \boldsymbol{\beta}^{(n)}\right)
+\left(\beta_i^{(n)}\right)^{\alpha}
 -c_iV_{(n)}\left(\delta\beta_i^{(n)}; \boldsymbol{\beta}^{(n)}\right)
-\left(\delta\beta_i^{(n)}\right)^{\alpha}\Big]
\leq 0.
\eeqn
Here $V$ is subscripted by $(n)$ to indicate that the processing time is computed in system $n$ with load $\rho^{(n)}$.
\end{theorem}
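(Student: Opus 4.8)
The plan is to split the bracketed difference into two pieces and control each via the asymptotic exactness of the heavy traffic approximation established earlier. First I would write the quantity inside the limit as a sum of three terms: the payment difference $(\beta_i^{(n)})^\alpha - (\delta\beta_i^{(n)})^\alpha$, the heavy-traffic cost difference $c_i(1-\rho^{(n)})[V^{HT}_{(n)}(\beta_i^{(n)};\bsy{\beta}^{(n)}) - V^{HT}_{(n)}(\delta\beta_i^{(n)};\bsy{\beta}^{(n)})]$, and the two approximation errors $c_i(1-\rho^{(n)})[V_{(n)} - V^{HT}_{(n)}]$ evaluated at $\beta_i^{(n)}$ and at $\delta\beta_i^{(n)}$. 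The remark following Definition~\ref{def:ht_wait_time} states that $(1-\rho)[V^{HT}(\beta;\bsy{\beta}) - V(\beta;\bsy{\beta})] \to 0$ as $\rho\to 1$ for any fixed $\beta$, so the approximation-error terms should vanish in the limit provided the priority arguments stay in a compatible regime.

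The core of the argument is that the combination of the payment difference and the heavy-traffic cost difference is nonpositive. Here I would invoke the defining property of the HTE: by Definition~\ref{def:approxe}, $\beta_i^{(n)}$ minimizes $c_i V^{HT}_{(n)}(\beta;\bsy{\beta}^{(n)}) + \beta^\alpha$ over $\beta>0$. Since $\delta\beta_i^{(n)}$ is a feasible competitor (for $\delta>0$), the minimizing property gives immediately
\beqn
c_i V^{HT}_{(n)}(\beta_i^{(n)};\bsy{\beta}^{(n)}) + (\beta_i^{(n)})^\alpha \leq c_i V^{HT}_{(n)}(\delta\beta_i^{(n)};\bsy{\beta}^{(n)}) + (\delta\beta_i^{(n)})^\alpha.
\eeqn
Multiplying through by $(1-\rho^{(n)})$ preserves the inequality, so the payment difference plus the scaled heavy-traffic cost difference is bounded above by zero for \emph{every} $n$, not merely in the limit. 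The payment difference itself is not scaled by $(1-\rho^{(n)})$, so to keep the bookkeeping honest I would note that the closed-form HTE in \eqref{eq:opt_beta_tilde} makes $\beta_i^{(n)}$ scale like $(1-\rho^{(n)})^{1/\alpha}$, and hence $(\beta_i^{(n)})^\alpha$ is itself of order $(1-\rho^{(n)})$; this is what makes the unscaled payment terms enter the limit on the same footing as the scaled processing-time terms.

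I expect the main obstacle to be the $\delta = 0$ boundary case, which is explicitly allowed ($\delta\geq 0$) but corresponds to setting priority to zero --- infeasible in the minimization of Definition~\ref{def:approxe} and degenerate in $V^{HT}$, since $V^{HT}(\beta;\bsy{\beta}) \to \infty$ as $\beta\to 0$. The resolution is that as $\delta\to 0^+$ the processing-time cost $c_i V_{(n)}(\delta\beta_i^{(n)};\bsy{\beta}^{(n)})$ blows up, so the bracketed difference tends to $-\infty$ and the inequality holds trivially; I would handle $\delta>0$ by the optimality argument above and treat $\delta=0$ separately by this divergence. A secondary technical point is that the asymptotic-exactness statement is pointwise in $\beta$, whereas the arguments $\beta_i^{(n)}$ and $\delta\beta_i^{(n)}$ vary with $n$; I would need the convergence $(1-\rho^{(n)})[V^{HT}_{(n)} - V_{(n)}] \to 0$ to hold uniformly enough along the specific sequence $\beta_i^{(n)}$ determined by \eqref{eq:opt_beta_tilde}, which should follow by tracking the explicit $(1-\rho^{(n)})$-scaling of the HTE priorities through the formula for $V$ in Theorem~\ref{thm:vmulti} together with the heavy-traffic limit \eqref{eq:eni}.
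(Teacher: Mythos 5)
Your proposal follows essentially the same route as the paper: control the bracketed quantity by combining the HTE optimality inequality for $V^{HT}$ with the asymptotic exactness of the heavy-traffic approximation, with $\delta=0$ handled as a degenerate case. The one step you leave open --- that asymptotic exactness is stated pointwise in $\beta$ while the arguments $\beta_i^{(n)}$ and $\delta\beta_i^{(n)}$ move with $n$ --- is precisely the step the paper closes, and it does so more cleanly than ``tracking the scaling through Theorem~\ref{thm:vmulti}'': both $V$ and $V^{HT}$ are invariant under rescaling all priorities (they depend only on the ratios $\beta_j/\beta$ and $\beta_j/\beta_k$), and by \eqref{eq:opt_beta_tilde} the HTE ratios $\theta_j=\beta_j^{(n)}/\beta_i^{(n)}=(c_j/c_i)^{1/(\alpha+1)}$ are independent of $n$ and of $\rho^{(n)}$. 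Hence $V_{(n)}(\beta_i^{(n)};\boldsymbol{\beta}^{(n)})=V_{(n)}(1;\boldsymbol{\theta})$ and likewise for $V^{HT}_{(n)}$ and for the deviation point $\delta\beta_i^{(n)}$, so the pointwise limit applies verbatim; you should make this explicit rather than appeal to a uniformity that is not established. Two smaller corrections: \eqref{eq:opt_beta_tilde} gives $\beta_i^{(n)}\propto(1-\rho^{(n)})^{-1/\alpha}$, not $(1-\rho^{(n)})^{1/\alpha}$, so $(\beta_i^{(n)})^\alpha$ is of order $(1-\rho^{(n)})^{-1}$; and no bookkeeping for the payment terms is needed in any case, since they appear only inside the optimality inequality, which is nonpositive for every $n$ once multiplied through by $(1-\rho^{(n)})$ --- only the two approximation-error terms $c_i(1-\rho^{(n)})[V_{(n)}-V^{HT}_{(n)}]$ require a limit argument.
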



In the theorem, we consider deviations by a multiplicative constant factor
  rather than by an additive constant  
because (\ref{eq:opt_beta_tilde}) implies that, as $\rho \to 1$, the
heavy traffic equilibrium increases without bound; as a result, it is
straightforward to check that any additive constant deviation has no
beneficial effect as $\rho$ approaches $1$.
Note that the processing time is only asymptotically exact up to
a $1-\rho$ scaling, thus the same is true for this approximation theorem as
well.  Indeed, this is what we give up by studying heavy traffic:
while we gain analytical tractability, the ``resolution'' to which we
can study deviations is scaled by $1-\rho$.  This tradeoff is
systematic throughout the study of large scale queueing models
even without strategic behavior.

\subsection{Numerics: Approximation Error}

In this subsection, we numerically study the approximation error
  between the HTE and the exact NE, with different system parameters
  $(K, \{c_i\}, \{\lambda_i\}, \alpha, \rho)$; this complements our
  theoretical analysis above.  Given a HTE $\bsy{\beta}^{HT}$ and an NE
  $\bsy{\beta}^{NE}$, we use relative error as a measure of
  approximation, i.e.,
  $\max_i(\beta_i^{HT}-\beta_i^{NE})/\beta_i^{NE}$.  We compute NE
  using best response dynamics; surprisingly we found that best
  response dynamics converge to NE for all parameter choices below.



\begin{figure}
\begin{center}
\includegraphics[width=6in]{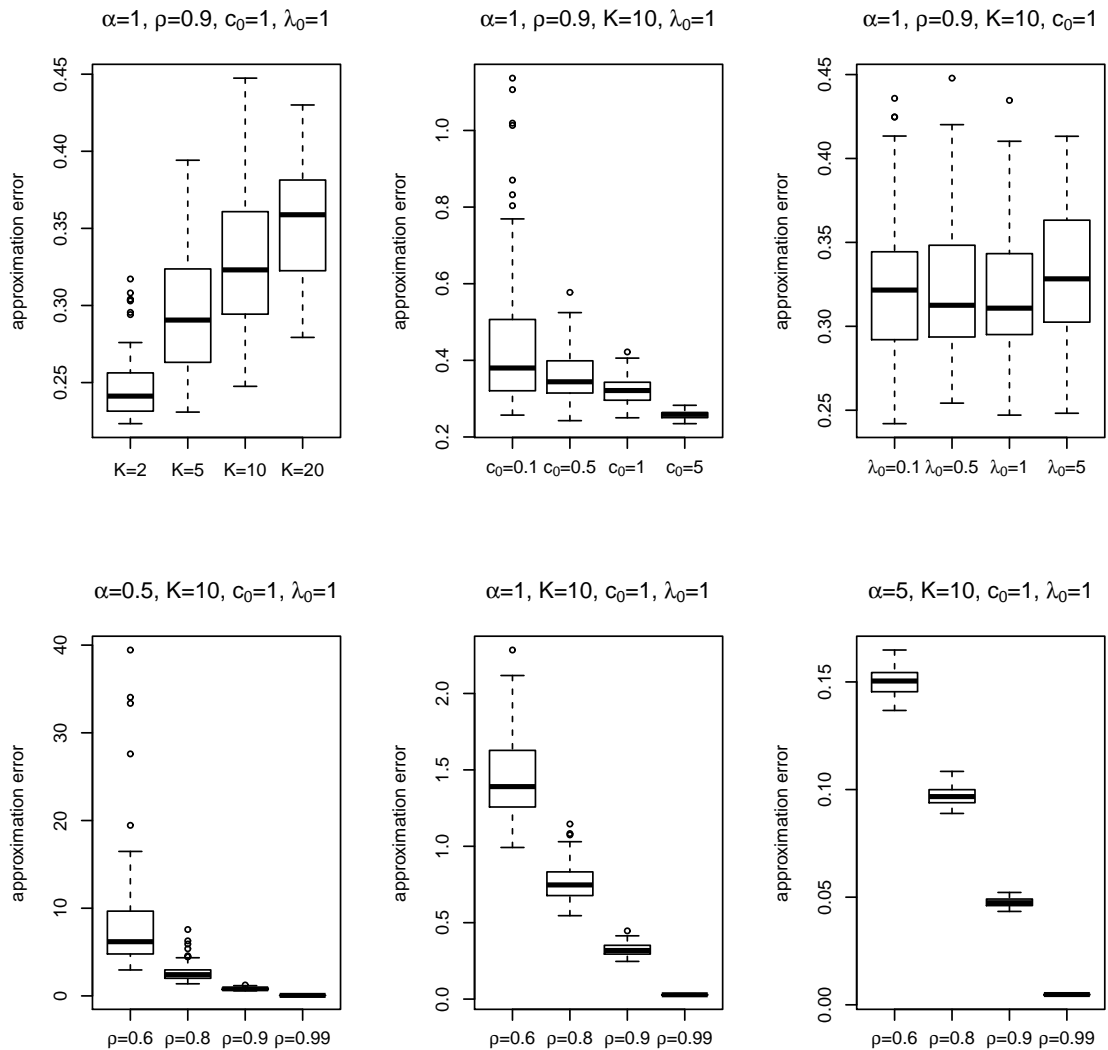}
\caption{Relative error of HTE vs. NE under different parameter choices.\vspace{-0.5in}} 
\label{fig:numerics}
\end{center}
\end{figure}

Figure \ref{fig:numerics} shows the impact of changes in system
parameters on approximation accuracy.  To quantify the heterogeneity
of $c_i$'s, we assume $c_i$'s are i.i.d.~ from a uniform distribution
on $[0,10]$, plus a constant $c_0$.
A smaller $c_0$, therefore, induces a potentially larger ratio between
the smallest and largest $c_i$. 
Similarly, we assume $\lambda_i$'s are i.i.d.~drawn from uniform$[0,
  10]+\lambda_0$.  To illustrate the change, we fix most of these
parameters at $K=10, c_0=\lambda_0=1, \alpha=1, \rho=0.9$, but vary
one or two of them at a time.  For each set of parameters, we have 100
simulation samples (of cost vector and arrival rate vector), and the
approximation errors are summarized by the boxplots.

In the upper panel of Figure \ref{fig:numerics}, 
we see that heterogeneity in the system weakens the approximation.
Numerical results show that approximation error is higher with smaller
$c_0$ or a larger number of classes; both increase heterogeneity.
Heterogeneity in the arrival rates appears to cause less degradation
in the approximation, but it can be shown that {\em both} arrival rate
heterogeneity and significant cost heterogeneity together can amplify
approximation errors.


In the lower panel of Figure  \ref{fig:numerics},
we vary $\alpha$ and $\rho$.
The results suggests that the approximation error is lower with larger $\alpha$.
Note that $\alpha$ describes the marginal cost to payment;
therefore larger $\alpha$ induces smaller payments and the relative heterogeneity among user decisions diminishes.
Regarding $\rho$, the error decreases as we approach heavy traffic
($\rho$ close to 1), as expected.

We conclude by noting that the approximation error can be made arbitrarily
large through appropriate parameter choices; one such example is given
when $K = 2$, $c_1/c_2 \ra\infty$ and $\lambda_1/\lambda_2\ra\infty$.
In this case since
$\lambda_2$ and $c_2$ are relatively small, the class $2$
optimal priority level in the exact NE is also extremely small.
However, in the heavy traffic approximation, the processing time
$V^{HT}(\beta, \bsy{\beta})$ is inversely proportional to $\beta$; 
so in HTE no user will chooses an extremely small $\beta$, leading
to arbitrarily large error rate.



\section{Sensitivity, Efficiency and Revenue}
\label{sec:insights}

The tractability of heavy traffic equilibrium allows us to
analytically study parameter sensitivity, as well as efficiency and
revenue at the HTE equilibrium.  
Throughout this section, we let
$\bsy{\beta}^*$ denote the HTE.

\subsection{Sensitivity} 

In this subsection, we analyze the sensitivity of the HTE, i.e., how
the equilibrium behaves with respect to changes in system
parameters. These observations follow directly from
(\ref{eq:opt_beta_tilde}).

{\em Sensitivity with respect to $c$.} If all $c_i$ are scaled by a constant $\zeta > 0$, 
then every $\beta^*_i$ is scaled by $\zeta^{\frac{1}{\alpha}}$. 
This is rather intuitive since the objective function is the sum of
expected processing cost and $\beta_i^\alpha$, and the expected
processing cost does not change any $V_i$.
Therefore the equilibrium is the same up to a scaling factor.
Further more, simple first derivative analysis shows that 
all equilibrium $\beta_i$'s are increasing in any single $c_j$.
Note that 
increasing $c_j$ provides extra incentive for class $j$ users to
invest in priority $\beta_j$.  
In equilibrium, all other $\beta_i$'s also increase as a result of
priority competition in the server. 


{\em Sensitivity with respect to $\rho$.} The ratio $\beta_i^*/\beta_j^*=(c_i/c_j)^{\frac{1}{\alpha+1}}$ is independent of $\rho$,
i.e., changing $\rho$ will change each $\beta^*_i$
but will not affect $\beta_i^*/\beta_j^*$ for any $i, j$.
Therefore the ratio between service capacity allocated to any pair of jobs,
as well as the ratio between the heavy traffic processing times of a pair of jobs, are invariant to the load of the system.

{\em Sensitivity with respect to $\alpha$.} When $\alpha\rightarrow 0$, every $\beta^*_i\rightarrow\infty$; when $\alpha\rightarrow\infty$, every $\beta^*_i\rightarrow1$. 
This is due to the fact that as $\alpha \ra 0$, jobs face a strongly
diminishing marginal cost to higher choices of $\beta$, and hence,
prefer to choose higher $\beta$ at the equilibrium; while the effect
is reversed as $\alpha \ra \infty$.


\subsection{Efficiency} 

In HTE, efficiency is characterized by the expected total cost incurred to the system in one unit of time:
\beqn\label{eq:efficiency}
\mathcal{C}=\sum_{i=1}^K\lambda_ic_iV^{HT}(\beta^*_i; \bsy{\beta}^*)
=\left(\frac{\rho}{1-\rho}\right)\left(\sum_{i=1}^K\lambda_ic_i^{\frac{\alpha}{\alpha+1}}\right)\left(\sum_{i=1}^K\lambda_ic_i^{-\frac{1}{\alpha+1}}\right)^{-1}.
\eeqn
We call $\mathcal{C}$ the {\em system processing cost} (a more efficient system has a lower value of $\mathcal{C}$).
Given fixed $\lambda_i$ and $c_i$ $(i = 1, \cdots, K)$, 
the efficiency depends on the system parameter $\alpha$ and the load $\rho$ as follows.

{\em Dependence of $\mathcal{C}$ on $\rho$}. We note that
$\mathcal{C}$ is proportional to $\rho/(1-\rho)$, and hence is
increasing in $\rho$.  
This is because a larger load $\rho$ implies a busier system, 
and therefore the processing time is longer.  (Note that we fixed
$\bsy{\lambda}$, so varying $\rho$ is equivalent to varying $\mu$.)

{\em Dependence of $\mathcal{C}$ on $\alpha$}. It is well known that
the system optimal scheduling policy is the 
$c\mhyphen\mu$ rule \cite{coxsmi61}: classes are given strict priority
in descending order of $c_i\mu_i$ (or equivalently in this paper, in
descending order of $c_i$, since we assume that all $\mu_i$ are
the same). That is, for any $1\leq i, j\leq K$, class $j$ jobs are
preempted by class $i$ jobs if $c_i\mu_i > c_j\mu_j$. Jobs with the
same value of $c\mu$ are served in first-in-first-out (FIFO) scheme. 
Since $\beta_i^*/\beta_j^*=(c_i/c_j)^{\frac{1}{\alpha+1}}$, for $c_i>c_j$,
the ratio $\beta_i^*/\beta_j^*$ is higher with smaller $\alpha$, so we
expect higher $\alpha$ lead to less efficient equilibria.
This intuition is analytically stated in the following theorem.

\begin{proposition}\label{thm:eff_alpha}
The HTE system processing cost $\mathcal{C}$ is increasing in $\alpha>0$.
\end{proposition}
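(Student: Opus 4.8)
The plan is to strip off the prefactor $\rho/(1-\rho)$, which does not depend on $\alpha$, and show that the remaining ratio
\[
h(\alpha) = \frac{\sum_{i=1}^K \lambda_i c_i^{\frac{\alpha}{\alpha+1}}}{\sum_{i=1}^K \lambda_i c_i^{-\frac{1}{\alpha+1}}}
\]
is increasing in $\alpha$. I would first clean up the exponents via the substitution $t = 1/(\alpha+1)$, which maps $\alpha \in (0,\infty)$ bijectively and \emph{decreasingly} onto $t \in (0,1)$ and satisfies $\alpha/(\alpha+1) = 1-t$. Writing $g(t) = (\sum_i \lambda_i c_i^{1-t})/(\sum_i \lambda_i c_i^{-t})$, the claim ``$\mathcal{C}$ increasing in $\alpha$'' becomes the equivalent claim ``$g$ decreasing in $t$.''

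The key structural observation I would exploit is that $g(t)$ is a weighted average of the costs. Define the tilted probability weights $p_i(t) = \lambda_i c_i^{-t}/\sum_j \lambda_j c_j^{-t}$, and let $c$ denote the random variable taking value $c_i$ with probability $p_i(t)$. Then $g(t) = \sum_i p_i(t) c_i = \mathbb{E}_{p(t)}[c]$. Differentiating this exponentially tilted family, I would show that the derivative is a covariance,
\[
g'(t) = \mathrm{Cov}_{p(t)}\!\left(c,\, -\ln c\right) = -\,\mathrm{Cov}_{p(t)}\!\left(c,\, \ln c\right),
\]
using $\frac{d}{dt}\ln p_i(t) = -\ln c_i + \mathbb{E}_{p(t)}[\ln c]$ together with the standard identity $\frac{d}{dt}\mathbb{E}_{p(t)}[X] = \mathrm{Cov}_{p(t)}(X, \frac{d}{dt}\ln p)$ (valid because $\mathbb{E}_{p(t)}[\frac{d}{dt}\ln p] = 0$).

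The crux of the argument --- and the only step requiring genuine care --- is then to sign this covariance. Since $x \mapsto \ln x$ is strictly increasing, $c$ and $\ln c$ are positively correlated under any distribution, so $\mathrm{Cov}_{p(t)}(c,\ln c) \ge 0$ and hence $g'(t) \le 0$. I would make this rigorous by the coupling/symmetrization identity
\[
\mathrm{Cov}_{p(t)}(c,\ln c) = \tfrac{1}{2}\sum_{i,j} p_i(t)\,p_j(t)\,(c_i - c_j)(\ln c_i - \ln c_j),
\]
in which every summand is nonnegative because $(c_i - c_j)$ and $(\ln c_i - \ln c_j)$ always share the same sign. Because the costs are distinct ($c_1 > \cdots > c_K$), the inequality is strict whenever $K \ge 2$, giving $g'(t) < 0$.

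Chaining the signs then finishes the proof: $g$ is strictly decreasing in $t$, $t$ is strictly decreasing in $\alpha$, and the prefactor $\rho/(1-\rho)$ is positive, so $d\mathcal{C}/d\alpha > 0$, which is the claim. (For $K = 1$ the ratio collapses to $c_1$ and $\mathcal{C}$ is constant in $\alpha$, consistent with reading the statement weakly.) I do not anticipate any real obstacle beyond correctly bookkeeping the substitution and the covariance sign; at bottom the proposition is a monotonicity-of-tilted-mean statement, and the substitution is what reveals that structure.
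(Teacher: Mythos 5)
Your proof is correct and is essentially the paper's argument in probabilistic dress: the paper differentiates $\mathcal{C}$ directly in $\alpha$ and obtains the pairwise sum $\sum_{i<j}\lambda_i\lambda_j\bigl(c_i^{\alpha/(\alpha+1)}c_j^{-1/(\alpha+1)}-c_j^{\alpha/(\alpha+1)}c_i^{-1/(\alpha+1)}\bigr)\ln(c_i/c_j)$, signed term by term by the same ``both factors share a sign'' observation you use. After your substitution $t=1/(\alpha+1)$ and the factorization $c_i^{1-t}c_j^{-t}-c_j^{1-t}c_i^{-t}=c_i^{-t}c_j^{-t}(c_i-c_j)$, your symmetrized covariance $\tfrac12\sum_{i,j}p_i(t)p_j(t)(c_i-c_j)(\ln c_i-\ln c_j)$ is term-for-term the same expression up to normalization, so the tilted-mean framing is a clean repackaging rather than a different route.
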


We note that even when $\alpha$ approaches 0, the HTE does not
approach social optimum. In fact, for any $i,j$ such that $c_i>c_j$,
we have that $\beta_i^*/\beta_j^*=(c_i/c_j)^{\frac{1}{\alpha+1}}$, and
hence $1<\beta_i^*/\beta_j^*<c_i/c_j$. On the other hand, with the
$c\mhyphen\mu$ rule, if $c_i>c_j$, then class $i$ jobs completely
preempt class $j$ jobs, which can be interpreted as the case where
$\beta_i^*/\beta_j^* = \infty$. 
Therefore, it is clear that the HTE can never be as efficient as the
$c\mhyphen\mu$ rule, for any choice of $\alpha$.
However, we can upper bound the {\em price of anarchy} (PoA) of the HTE, as
stated in the following theorem.  The PoA is the ratio
$\mathcal{C}/\mathcal{C}^{opt}$, where $\mathcal{C}^{opt}$ is the
minimum expected system processing cost (achieved by the $c\mhyphen\mu$ rule).

\begin{theorem}\label{thm:hte_poa}
The price of anarchy (PoA) of the HTE is upper-bounded by:
\beqn
\frac{\mathcal{C}}{\mathcal{C}^{opt}} <
\frac{  \sum_{i=1}^{K-1} (\lambda_i/\lambda_K)(c_i/c_K)^{\frac{\alpha}{\alpha+1}} + 1 }
{  \sum_{i=1}^{K-1} (\lambda_i/\lambda_K)(c_i/c_K)^{-\frac{1}{\alpha+1}}  +1  }
< \left(\dfrac{\lambda-\lambda_K}{\lambda_K}\right)\left(\dfrac{c_1}{c_K}\right)^{\frac{\alpha}{\alpha+1}}+1.
\eeqn
\end{theorem}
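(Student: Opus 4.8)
The plan is to prove the two inequalities separately: for the left one I would first produce a clean lower bound on $\mathcal{C}^{opt}$ and then simplify $\mathcal{C}/\mathcal{C}^{opt}$ algebraically, while the right one is an elementary estimate of the resulting quotient.

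\emph{Lower bound on $\mathcal{C}^{opt}$.} The key observation is that, because all classes share the same exponential service requirement with rate $\mu$ and the server has unit capacity, the total number of jobs in system $N=\sum_k N_k$ evolves as a birth--death process with birth rate $\lambda$ and death rate $\mu$ (whenever $N\ge 1$) under \emph{every} work-conserving non-anticipating policy: splitting the unit capacity among jobs does not change the aggregate departure rate $\mu$. Hence $N$ has the law of an ordinary $M/M/1$ queue and $\bE[N]=\rho/(1-\rho)$, independent of scheduling. Writing the optimal cost via Little's law as $\mathcal{C}^{opt}=\sum_{k=1}^K c_k\,\bE[N_k]$ and using $c_K=\min_k c_k$, I would conclude
\[
\mathcal{C}^{opt}=\sum_{k=1}^K c_k\,\bE[N_k] \;>\; c_K\sum_{k=1}^K \bE[N_k]=c_K\,\frac{\rho}{1-\rho},
\]
where the inequality is strict because $c_k>c_K$ for $k<K$ while every class has strictly positive occupancy under the $c\mhyphen\mu$ rule (here $K\ge 2$).

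\emph{Left inequality.} Substituting $\mathcal{C}$ from \eqref{eq:efficiency} and dividing by $c_K\,\rho/(1-\rho)$, the factor $\rho/(1-\rho)$ cancels. Factoring $\lambda_K c_K^{\alpha/(\alpha+1)}$ out of both the numerator sum $\sum_i\lambda_i c_i^{\alpha/(\alpha+1)}$ and the denominator $c_K\sum_i\lambda_i c_i^{-1/(\alpha+1)}$ — using $1-\tfrac{1}{\alpha+1}=\tfrac{\alpha}{\alpha+1}$ so that the two prefactors coincide — turns the quotient into exactly the stated middle expression, with strictness inherited from the strict lower bound on $\mathcal{C}^{opt}$.

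\emph{Right inequality.} Let $A=\sum_{i=1}^{K-1}(\lambda_i/\lambda_K)(c_i/c_K)^{\alpha/(\alpha+1)}$ and $B=\sum_{i=1}^{K-1}(\lambda_i/\lambda_K)(c_i/c_K)^{-1/(\alpha+1)}$. Since $B>0$ we have $(A+1)/(B+1)<A+1$, and since $c_i\le c_1$ together with $\sum_{i=1}^{K-1}\lambda_i=\lambda-\lambda_K$ gives $A\le \big((\lambda-\lambda_K)/\lambda_K\big)(c_1/c_K)^{\alpha/(\alpha+1)}$, combining the two yields the right-hand bound. The only real content lies in the lower bound on $\mathcal{C}^{opt}$ — specifically the policy-independence of the total occupancy, which hinges on the memorylessness of service and the common rate $\mu$ — whereas the remaining two inequalities reduce to routine algebra and term-by-term majorization.
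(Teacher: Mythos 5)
Your proof is correct and follows essentially the same route as the paper's: both reduce the claim to the lower bound $\mathcal{C}^{opt} > c_K\,\rho/(1-\rho)$ and then perform the identical algebra for the two displayed inequalities. The only difference is how the total occupancy $\rho/(1-\rho)$ is justified --- the paper computes the per-class occupancies under the $c\mhyphen\mu$ rule explicitly and telescopes the sum, whereas you invoke work-conservation together with the common exponential service rate to argue that the aggregate queue length is policy-independent; your version is marginally more general but the substance is the same.
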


Note that the upper bound can be made arbitrarily large through
appropriate parameter choices; further, this is tight, in the sense
that there exist systems where the PoA of HTE is in fact arbitrarily
large. For example, let $\lambda_i = \lambda$ for all $i$, and
set $c_K = 1$, $c_i=m$ for $i=1, \cdots, K-1$, and choose $\mu$ so that
$\rho=1-m^{-2}$. Then it can be shown that 
$\mathcal{C}/\mathcal{C}^{opt} =
\Omega\left((K-1)m^{\frac{\alpha}{\alpha+1}}\right)$ as $m \to \infty$
(see the proof of the theorem for details).  

We also note that the PoA bound is increasing in $\alpha$, 
which matches the intuition that a scheme closer to strict priority in
descending cost order yields higher social welfare.  
If we let $\alpha\rightarrow0$, then the PoA is asymptotically bounded by $\lambda/\lambda_K$.
In that case, if the arrival rates of all classes are the same, then
the PoA is bounded by $K$. 
We can also let $\lambda_1, \cdots, \lambda_{K-1}\rightarrow 0$ to
make the PoA approach $1$, but this is not surprising since in this
case the system essentially consists of only one class.

\subsection{Revenue}
The revenue of the server per unit time is the sum of expected payments in one unit of time:
\begin{equation}\label{eq:revenue}
\mathcal{R}=\sum_{i=1}^K\lambda_i(\beta_i^*)^{\alpha}
=\left(\frac{\rho}{\alpha(1-\rho)}\right)\left(\sum_{i=1}^K\lambda_ic_i^{\frac{\alpha}{\alpha+1}}\right)\left(\sum_{i=1}^K\lambda_ic_i^{-\frac{1}{\alpha+1}}\right)^{-1}.
\end{equation}
Given fixed $\lambda_i$ and $c_i$ $(i = 1, \cdots, K)$, 
the revenue depends on the system parameter $\alpha$ and the load $\rho$ as follows.

{\em Dependence of $\mathcal{R}$ on $\rho$}. The revenue is
proportional to $\rho/(1-\rho)$, therefore the revenue is increasing
in $\rho$. Heavier traffic will induce greater congestion, and hence,
jobs have to invest more in their purchase of priority in order to keep the
same performance. 

{\em Dependence of $\mathcal{R}$ on $\alpha$}. The revenue depends on $\alpha$ in three terms, 
and it seems that in general the effect of changing $\alpha$ in the last two terms 
is significantly smaller than
that of changing $\alpha$ in the first term $\rho/(\alpha(1-\rho))$.
Hence we would expect that the revenue is in general decreasing in $\alpha$.
The next result shows this intuition holds if $c_1/c_K$ is not too high.

\begin{theorem}\label{thm:rev_alpha}
The revenue $\mathcal{R}$ is decreasing in $\alpha>0$ if $c_1/c_K<e^4$.
\end{theorem}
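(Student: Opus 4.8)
The plan is to prove monotonicity by showing that the logarithmic derivative $\frac{d}{d\alpha}\ln\mathcal{R}$ is strictly negative for all $\alpha>0$ whenever $c_1/c_K<e^4$. Writing $f(\alpha)=\sum_i\lambda_i c_i^{\alpha/(\alpha+1)}$ and $g(\alpha)=\sum_i\lambda_i c_i^{-1/(\alpha+1)}$, formula \eqref{eq:revenue} gives $\mathcal{R}=\frac{\rho}{1-\rho}\cdot\frac{f(\alpha)}{\alpha\,g(\alpha)}$, so that
\[
\frac{d}{d\alpha}\ln\mathcal{R}=-\frac{1}{\alpha}+\frac{f'(\alpha)}{f(\alpha)}-\frac{g'(\alpha)}{g(\alpha)}.
\]
Since $\frac{d}{d\alpha}\frac{\alpha}{\alpha+1}=\frac{d}{d\alpha}\big(-\frac{1}{\alpha+1}\big)=\frac{1}{(\alpha+1)^2}$, each of $f'/f$ and $g'/g$ is a $\lambda_i c_i^{(\cdot)}$-weighted average of $\ln c_i$, scaled by $(\alpha+1)^{-2}$. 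Thus it will suffice to bound the difference of these two weighted averages and show it stays below $(\alpha+1)^2/\alpha$.

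The key idea I would use is to recognize this difference as a variance. Define the cumulant-generating function $\Lambda(t)=\ln\sum_i\lambda_i c_i^{t}$; then $\Lambda'(t)$ is exactly the $\lambda_i c_i^{t}$-weighted mean of $\ln c_i$, while $\Lambda''(t)=\mathrm{Var}_t[\ln c]\ge 0$ is the variance of $\ln c_i$ under the tilted weights $\propto\lambda_i c_i^{t}$. With this notation,
\[
\frac{f'(\alpha)}{f(\alpha)}-\frac{g'(\alpha)}{g(\alpha)}=\frac{1}{(\alpha+1)^2}\left[\Lambda'\!\left(\tfrac{\alpha}{\alpha+1}\right)-\Lambda'\!\left(-\tfrac{1}{\alpha+1}\right)\right].
\]
The crucial structural fact is that the two arguments differ by exactly $1$, because $\frac{\alpha}{\alpha+1}-\big(-\frac{1}{\alpha+1}\big)=1$. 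Hence the mean value theorem turns the bracket into $\Lambda''(\xi)$ for a single $\xi$ lying between the two arguments, i.e., into one clean variance of $\ln c$ under some tilted law.

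To finish I would bound that variance sharply. Since $\ln c_i\in[\ln c_K,\ln c_1]$, Popoviciu's inequality gives $\Lambda''(\xi)\le\frac14(\ln c_1-\ln c_K)^2=\frac14\ln^2(c_1/c_K)$, so
\[
\frac{f'(\alpha)}{f(\alpha)}-\frac{g'(\alpha)}{g(\alpha)}\le\frac{\ln^2(c_1/c_K)}{4(\alpha+1)^2},
\]
and $\frac{d}{d\alpha}\ln\mathcal{R}<0$ will follow as soon as $\ln^2(c_1/c_K)<4(\alpha+1)^2/\alpha$ for every $\alpha>0$. Minimizing $(\alpha+1)^2/\alpha=\alpha+2+\alpha^{-1}$ over $\alpha>0$ yields the value $4$ at $\alpha=1$, so the requirement collapses to $\ln^2(c_1/c_K)<16$, i.e., $c_1/c_K<e^4$, exactly the stated hypothesis. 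The main obstacle is this middle step: a crude triangle-inequality bound on the difference of the two weighted means is far too lossy to recover the sharp threshold $e^4$. It is the cumulant/variance reformulation, combined with the observation that the two exponents differ by precisely $1$ (so the mean value theorem produces a single variance rather than an unwieldy integral), that makes the constant come out exactly right.
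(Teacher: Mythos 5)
Your proof is correct, and it reaches the theorem by a genuinely different route from the paper's. The paper differentiates $\mathcal{R}$ directly and writes the numerator of $\partial\mathcal{R}/\partial\alpha$ as a sum over pairs $i<j$ of terms
$\lambda_i\lambda_j\bigl(c_i^{\frac{\alpha}{\alpha+1}}c_j^{-\frac{1}{\alpha+1}}-c_j^{\frac{\alpha}{\alpha+1}}c_i^{-\frac{1}{\alpha+1}}\bigr)\ln(c_i/c_j)$, minus the full double sum $\sum_{i,j}\lambda_i\lambda_jc_i^{\frac{\alpha}{\alpha+1}}c_j^{-\frac{1}{\alpha+1}}$; it then bounds each pairwise term by dropping the negative part, using $\ln(c_i/c_j)\le 4$ termwise, and invoking $4\alpha\le(\alpha+1)^2$. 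Your pairwise double sum is exactly $fg\,[\Lambda'(\tfrac{\alpha}{\alpha+1})-\Lambda'(-\tfrac{1}{\alpha+1})]$, so the two arguments are bounding the same object, but you do it by the mean value theorem on the cumulant function plus Popoviciu's variance inequality rather than term-by-term. Your bound $\tfrac14\ln^2(c_1/c_K)$ on the bracket is in fact at least as tight as the paper's bound $\ln(c_1/c_K)$ whenever $\ln(c_1/c_K)\le 4$, with equality exactly at the threshold, which is why both arguments land on the same constant $e^4$. What your version buys is a structural explanation of that constant (Popoviciu's $1/4$, the unit gap between the two tilting exponents, and the minimum $4$ of $(\alpha+1)^2/\alpha$ at $\alpha=1$); what the paper's version buys is elementarity, needing nothing beyond the sign of each summand. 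One small point worth making explicit if you write this up: the strictness of the final inequality (needed for ``decreasing'' rather than ``non-increasing'') comes from $\tfrac14\ln^2(c_1/c_K)<4\le(\alpha+1)^2/\alpha$, where the first inequality is strict under the hypothesis $c_1/c_K<e^4$.
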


On the other hand, $\mathcal{R}$ could be increasing in $\alpha$ in some cases.
For instance, if $K=2$ and $c_1/c_2$ is large enough, 
then $\partial\mathcal{R}/\partial\alpha$ is positive around $\alpha=1$ (see the proof of theorem for details).
To explain this special scenario where the monotonicity does not hold,
we first note that a smaller $\alpha$ in general induces a higher revenue
because jobs have incentive to purchase higher priority (as a response to the stronger diminishing marginal cost effect).
However, in the HTE,
significant asymmetry in costs will result in significant asymmetry in equilibrium priorities.
Therefore when $c_1/c_K$ is large, 
the optimal priorities already exhibit significant differences even
when $\alpha$ is not small, and thus 
in equilibrium at small $\alpha$, jobs have lower incentive to increase their priorities 
compared to what they do with mutually comparable costs.

With both (\ref{eq:efficiency}) and  (\ref{eq:revenue}), it is quite surprising to see that in the HTE,
\begin{equation*}
\text{total cost of all jobs} = \mathcal{C}=\alpha\mathcal{R}=\alpha\cdot\text{total revenue of the system}.
\end{equation*}
Thus, we obtain an interesting insight: {\em another interpretation of
  $\alpha$ is the users' equilibrium cost per unit revenue.}
We have shown that the user's total cost is increasing in $\alpha$,
(i.e., the system efficiency is decreasing in $\alpha$),
and the system revenue is decreasing in $\alpha$ under some mild conditions. Therefore, in a wide range of regimes,
from the standpoint of system manager, 
smaller $\alpha$ is more favorable in terms of both efficiency and revenue.
Note that smaller $\alpha$ is somewhat more ``unfair,'' however, as it approaches a strict priority system.

\section{Discussion and Conclusions}
\label{sec:dis}

We believe our work makes significant progress on two fronts.  First,
the DPS queueing model is important in its own right as a benchmark
model for analysis of priority pricing for shared resource services.
Our analysis provides extensive insight into this queueing system with strategic
behavior.  Second, and perhaps of greater longer term interest, our
approximation methodology suggests a broader research program for
understanding strategic  behavior in queueing systems: by exploiting
large system asymptotics, we can simplify {\em both} the complexity of
the stochastic system, as well as the complexity of the economic
system.

We conclude by discussing several extensions and open directions.

{\em Random order of service.}  Consider an alternative prioritized allocation policy, 
the {\em random order of service} (ROS) policy. 
In the ROS policy, only one job is served at a time and upon completion of this job, 
a new job starts to be served with probability proportional to its priority level.
Therefore, if there are currently
$N$ jobs waiting in the system and job $\ell$ has chosen priority level
$\beta_\ell$, then in the ROS policy, job $\ell$ is the next job to start service 
with probability $\beta_\ell/\sum_{m=1}^N\beta_m$.

Although ROS and DPS are different in the ways they allocate service among jobs,
the ratio of expected service allocated to two jobs is the same as
the ratio of their priorities in both schemes.
Therefore, we would expect some similarities in the expected processing times of jobs
in these two allocation policies.
In fact, Ayesta et al. \cite{aye11} show that in heavy traffic regime,
the expected processing time of a class $j$ job in a ROS system is exactly $V^{HT}(\beta_j; \boldsymbol{\beta})$, 
the same as in a DPS system.
Thus in the heavy traffic ROS system, 
the expected processing time of a job with priority $\beta$ is $V^{HT}(\beta; \bsy{\beta})$.
Therefore 
all our results on heavy traffic equilibria of the DPS system also
hold for the ROS system.

{\em Class level games.} Based on the heavy traffic processing time approximation results in
Theorem~\ref{thm:heavy}, one can also propose a similar heavy traffic
equilibrium (HTE) concept for class level games. However, although the
processing time approximation allows us to greatly simplify the
computation of best response strategies, we are not able to obtain a
closed form expression for the class level HTE.  
More specifically, we can similarly define heavy traffic processing time by analogously defining 
\[
\gamma_{-i}(\beta; \boldsymbol{\beta})=\frac{1}{\rho}\left[\sum_{j=1,j\neq i}^K\frac{\rho_j}{\beta_j}+\frac{\rho_i}{\beta}\right].
\]
We can obtain a system of {\em non-linear} equations to compute this class level heavy traffic equilibrium, 
but we are not able to get any closed form expressions for it, 
mainly due to two features of $\gamma_{-i}(\beta; \bsy{\beta})$ 
that are different from $\gamma(\boldsymbol{\beta})$ of a job level game. 
First, $\gamma_{-i}(\beta; \bsy{\beta})$ is subscripted by $i$, which implies that different classes face 
different environments in the system. 
This is because, unlike the infinitesimal single user in job level problem, 
each class as a player is not negligible in the system. 
Second, $\gamma_{-i}(\beta; \bsy{\beta})$ explicitly depends on $\beta$, the action of class $i$,
due to the {\em intra-class externality} behavior: 
in a class level game, a class chooses one
priority level for {\em all} its jobs simultaneously; 
while in a job level game, a job can choose any priority level regardless of other jobs belonging to its class.

Nevertheless, the intra-class externality effect will
become negligible in a regime where the number of classes is large;
this is a particularly useful regime for computing services where the
number of users grows large, and each user is viewed as a
distinct class. This observation motivates us to consider
a limiting model in which the number of classes approaches infinity
and any single class becomes infinitesimal. Thus, it can be connected
to the job level game model. We show that in the large system
limit of the class level game,
the heavy traffic equilibrium exists, is unique, can be computed in closed form,
and is the limit of the finite class equilibrium as the number of
class goes to infinity. Detailed discussions and proofs are in the Appendix.

{\em Networks}.  Our model considered a single resource; more
generally, we can extend some of our basic results to a network
setting.  One approach to considering models with more general network
structure is as follows.  Consider a setting with multiple resources,
where each resource runs its own market, and serves according to the
DPS policy.  Arriving jobs are characterized by a workload vector:
completion of service requires simultaneous effort from all resources
in this requirement vector, sufficient to complete the corresponding
workload at that resource generated by this job.  In this case, each
resource market operates independently of the others, but they are
coupled through the utility functions of the jobs.  In particular, we
assume job will be sensitive to the {\em maximum} completion time
across all resources it considers.  Since the maximum of a collection
of convex functions is still convex, if we consider a heavy traffic
equilibrium of the network game, the objective function of a
single job will remain convex in its own bid vector.  Using this
insight we can prove existence of HTE in a similar manner to our
earlier development.  In addition, we can leverage
the price of anarchy bounds of Theorem \ref{thm:hte_poa} to derive
bounds on inefficiency for HTE of the network setting; our approach
here is similar to \cite{Joh04}, who prove inefficiency bounds for
network resource allocation games by reduction to single resource games.  
Details of the network model are in the appendix.

{\em Endogenous arrival rates}.  Some previous work (e.g.,
\cite{mend90,kim03}) on priority pricing in queueing systems allows
for strategic choice of the arrival rate.  In our model, this might
mean jobs only enter if their total cost (cost of waiting plus
payment) does not exceed a reservation utility.  A significant
challenge here is that when arrival rates are endogenized, heavy
traffic cannot be exogenously guaranteed.  However, we believe
approximating the waiting time may still yield valuable insight into
this game.  Characterizing the quality of approximate equilibria in
this regime remains an open direction.

\bibliographystyle{plain}

\bibliography{qg_refs}


\appendix



\section{Proofs of Theorems}

\subsection{Proof of Theorem~\ref{thm:vmulti}}

We follow the same approach as in \cite{havwal97} (see also Theorem~4.8 in \cite{hashav03}) by decomposing
the problem into two parts. Fix a job with priority
$\beta > 0$, and fix $\boldsymbol{\beta}=(\beta_1, \cdots, \beta_K) > 0$
where $\beta_i$ is the priority level of class $i$ jobs. Let
$U(\beta; \boldsymbol{\beta}, n_1, \cdots, n_K)$ be the expected time in the
system if this job has priority $\beta$, and is currently being processed
with $n_i$ class $i$ jobs in the system. Then
conditional on the first transition, we have the following recursion:
\begin{equation}\label{eq:recursion}
\begin{aligned}
U(\beta; \boldsymbol{\beta}, n_1, \cdots,n_K)&=
\frac{1}{\lambda+\mu}+\sum_{i=1}^K\frac{\lambda_i}{\lambda+\mu}U(\beta;
\boldsymbol{\beta}, n_1, \cdots, n_i+1, \cdots, n_K)\\
&+\sum_{i=1}^K\frac{\mu}{\lambda+\mu}\frac{n_i\beta_i}{\sum_{k=1}^K n_k\beta_k+\beta} U(\beta; \boldsymbol{\beta}, n_1,\cdots, n_i-1, \cdots, n_K).
\end{aligned}
\end{equation}
Following the same argument as in \cite{hashav03}, we can show that
$U(\beta, n_1, \cdots, n_K)$ is linear in each $n_i$. That is, there
are functions $U_i(\beta; \boldsymbol{\beta})~(i=0, \cdots, K)$ such that:
\begin{equation}\label{eq:linear}
U(\beta; \boldsymbol{\beta}, n_1, \cdots, n_K)=U_0(\beta; \boldsymbol{\beta})+\sum_{i=1}^KU_i(\beta; \boldsymbol{\beta})n_i.
\end{equation}
Substituting (\ref{eq:linear}) into (\ref{eq:recursion}) 
and comparing the coefficients of each $n_i$ as well as the constant term, we obtain:
\beqa
\beta_k\left[1+\sum_{i=1}^K\lambda_iU_i(\beta; \boldsymbol{\beta})\right]
&=&\mu(\beta+\beta_k)U_k(\beta; \boldsymbol{\beta}), \quad k=1, \cdots, K;\\
1+\sum_{i=1}^K\lambda_iU_i(\beta; \boldsymbol{\beta}) &=&\mu U_0(\beta; \boldsymbol{\beta}).
\eeqa
Hence the solution to
(\ref{eq:recursion}) is
\[U_i(\beta; \boldsymbol{\beta})=\frac{\beta_i}{\beta_i+\beta}U_0(\beta; \boldsymbol{\beta}),\ i=1, \cdots, K;\ \ 
 U_0(\beta; \boldsymbol{\beta})=\left[\mu-\sum_{i=1}^K\frac{\lambda_i\beta_i}{\beta_i+\beta}\right]^{-1}.
\]
The above equation, along with (\ref{eq:linear}) gives the expression
for $V(\beta; \boldsymbol{\beta})$ in terms of the expected queue
lengths.

\subsection{Proof of Theorem~\ref{thm:ne_existence}}

It is straightforward to show that $V(\beta;\bsy{\beta}) \leq
1/(\mu(1-\rho))$: the expected processing time of any job, regardless of priority,
cannot be longer than the expected length of a busy period in an M/M/1
queue with arrival rate $\lambda$ and service rate $\mu$ \cite{kle75}.  (This
follows because the discriminatory processor sharing policy is
work-conserving, so the length of a busy period will be identical to
that in an M/M/1 queue.)  It then follows that there exists an upper bound
$\overline{\beta}$ such that no job ever has $\beta >
\overline{\beta}$ as an optimal strategy.  Therefore, we can restrict
the strategy space of every job to the compact set
$[\underline{\beta}, \overline{\beta}]$.

 
Next we show that $V(\beta; \bsy{\beta})$ is convex in $\beta$. 
Combining 
(\ref{eq:vmulti}) and (\ref{eq:uk}) 
gives $V(\beta; \bsy{\beta})=f(\beta)/g(\beta)$, where 
$f(\beta)=1+\sum_{i=1}^K\frac{\lambda_iW_i\beta_i}{\beta_i+\beta}$ and $g(\beta)=\mu-\sum_{i=1}^K\frac{\lambda_i\beta_i}{\beta_i+\beta}$.
It is easy to check that $f(\beta)>0, f'(\beta)<0, f''(\beta)>0$; and
$g(\beta)>0, g'(\beta)>0, g''(\beta)<0$, 
thus
\[
\left[\frac{f(\beta)}{g(\beta)}\right]''=\frac{
\left[f''(\beta)g(\beta)-f(\beta)g''(\beta)\right]}{[g(\beta)]^2}
-\frac{2\left[f'(\beta)g(\beta)-f(\beta)g'(\beta)\right]g(\beta)g'(\beta)
}{[g(\beta)]^4}>0.
\]
Moreover, $\beta^{\alpha}$ is convex in $\beta$ since $\alpha\geq 1$, 
so $c_iV(\beta; \bsy{\beta})+\beta^{\alpha}$ is convex in $\beta$. 
It follows by Rosen's existence theorem \cite{Rosen65} that a pure Nash
equilibrium exists for the game in this case.

\subsection{Proof of Theorem~\ref{thm:ht_equi}}


We note that the best response of a class $i$ job given that all class $j$ jobs choose $\beta_j$ $(j = 1, 2, \ldots, K)$ is
\[
\beta_i(\bsy{\beta})=\arg\min_{\beta \geq 0} \left(c_i V^{HT}(\beta, \bsy{\beta}) + \beta^\alpha \right) = \arg\min_{\beta \geq 0} \left(\frac{c_i}{\mu(1-\rho)\beta\gamma(\bsy{\beta})}+ \beta^\alpha \right).
\]
The first order condition of optimality yields a unique solution:
\beqn
\beta_i^*(\bsy{\beta})=\left(c_i^{-1}\alpha\mu(1-\rho)\gamma(\bsy{\beta})\right)^{-\frac{1}{\alpha+1}}.
\label{eqn:first}
\eeqn
And the second derivative of the objective function at this point is
\[
\frac{2c_i}{\mu(1-\rho)\gamma(\bsy{\beta})}\frac{1}{(\beta^*_i)^3}+\alpha(\alpha-1)(\beta_i^*)^{\alpha-2}
=\frac{(\alpha+1)c_i}{\mu(1-\rho)\gamma(\bsy{\beta})}\frac{1}{(\beta_i^*)^3}> 0.
\]
Therefore, (\ref{eqn:first}) is the unique minimizer of the objective function.
Recall that 
$\gamma(\bsy{\beta}) = \sum_{i=1}^K\rho_i/(\rho \beta_i)$.
Thus, at the equilibrium,
\begin{equation}
\gamma(\bsy{\beta}^*)=\sum_{i=1}^K\rho_i\rho^{-1}\left(c_i^{-1}\alpha\mu(1-\rho)\gamma(\bsy{\beta}^*)\right)^{\frac{1}{\alpha+1}}
\Rightarrow \gamma(\bsy{\beta}^*)=(S_1/\lambda)^{\frac{\alpha+1}{\alpha}}\left(\alpha\mu(1-\rho)\right)^{\frac{1}{\alpha}},
\label{eqn:second}
\end{equation}
where $S_1 = \sum_{i=1}^K \lambda_ic_i^{-\frac{1}{\alpha+1}}$. Plugging (\ref{eqn:second}) into (\ref{eqn:first}) yields the result:
\[
\beta_i^*=c_i^{\frac{1}{\alpha+1}} [\lambda^{-1}\alpha \mu (1-\rho)S_1]^{-\frac{1}{\alpha}} 
= c_i^{\frac{1}{\alpha+1}} [\rho^{-1}\alpha(1-\rho)S_1]^{-\frac{1}{\alpha}}.
\]
Therefore, the heavy-traffic equilibrium always exists, is unique, and can be calculated by the above closed form expressions.

\subsection{Proof of Theorem~\ref{thm:epsilon_approx}}

First, we observe that for any $\beta > 0$ and $\bsy{\beta} = (\beta_1, \cdots, \beta_K) > 0$, $V(\beta;\bsy{\beta})$ and $V^{HT}(\beta; \bsy{\beta})$ depend on $\beta$ and $\bsy{\beta}$ only through the ratios $\beta_i/\beta$ and $\beta_i/\beta_j$ for any $i,j$.

\noindent Now, since $\boldsymbol{\beta}^{(n)}$ is the HTE at the $n$-th system, for any $\delta \geq 0$, we have that
\begin{equation}\label{eq:asymhte}
c_iV_{(n)}^{HT}(\beta_i^{(n)}; \boldsymbol{\beta}^{(n)})
+(\beta_i^{(n)})^{\alpha}
\leq c_iV_{(n)}^{HT}\left(\delta\beta_i^{(n)}; \boldsymbol{\beta}^{(n)}\right)
+\left(\delta\beta_i^{(n)}\right)^{\alpha}.
\end{equation}
Define $\theta_j=\beta_j^{(n)}/\beta_i^{(n)}$ ($j=1, \cdots, K$). Then (\ref{eq:opt_beta_tilde}) implies that $\theta_j=\beta_j^{(n)}/\beta_i^{(n)}$ is independent of $n$ and the load $\rho^{(n)}$.
Therefore,
$V_{(n)}^{HT}(\beta_i^{(n)}; \boldsymbol{\beta}^{(n)})=V^{HT}_{(n)}(1; \boldsymbol{\theta})$ 
and 
$V_{(n)}(\beta_i^{(n)}; \boldsymbol{\beta}^{(n)})=V_{(n)}(1; \boldsymbol{\theta})$. 
And thus,
\begin{equation}\label{eq:asym1}
\begin{aligned}
&\lim_{n\rightarrow\infty} (1-\rho^{(n)}) [V_{(n)}^{HT}(\beta_i^{(n)}; \boldsymbol{\beta}^{(n)})-
V_{(n)}(\beta_i^{(n)}; \boldsymbol{\beta}^{(n)})]\\
=&\lim_{n\rightarrow\infty} (1-\rho^{(n)}) [V^{HT}_{(n)}(1; \boldsymbol{\theta})-
V_{(n)}(1; \boldsymbol{\theta})]=0.
\end{aligned}
\end{equation}
The last equality follows from the asymptotic exactness of $V^{HT}$.
Similarly, 
$\beta_j^{(n)}/(\delta\beta_i^{(n)})$ is also independent of $n$ and system load,
so we have that
\begin{equation}\label{eq:asym2}
\lim_{n\rightarrow\infty} (1-\rho^{(n)}) 
\left[V_{(n)}^{HT}\left(\delta\beta_i^{(n)}; \boldsymbol{\beta}^{(n)}\right)
- V_{(n)}\left(\delta\beta_i^{(n)}; \boldsymbol{\beta}^{(n)}\right)\right]=0.
\end{equation}
Then the claim follows by plugging (\ref{eq:asym1}) and $(\ref{eq:asym2})$ into (\ref{eq:asymhte}).

\subsection{Proof of Proposition~\ref{thm:eff_alpha}}

Take the first derivative of $\mathcal{C}$ with respect to $\alpha$, we have that
\begin{equation*}
\frac{\partial\mathcal{C}}{\partial\alpha}=\frac{\rho}{(1-\rho)(\alpha+1)^2}\sum_{i<j}
\left[\lambda_i\lambda_j(c_i^{\frac{\alpha}{\alpha+1}}c_j^{-\frac{1}{\alpha+1}}
-c_j^{\frac{\alpha}{\alpha+1}}c_i^{-\frac{1}{\alpha+1}})\ln\frac{c_i}{c_j}\right]\left(\sum_i\lambda_ic_i^{-\frac{1}{\alpha+1}}\right)^{-2}.
\end{equation*}
Since $c_i^{\frac{\alpha}{\alpha+1}}c_j^{-\frac{1}{\alpha+1}}
-c_j^{\frac{\alpha}{\alpha+1}}c_i^{-\frac{1}{\alpha+1}}$ and $\ln\frac{c_i}{c_j}$ have the same signs,
this derivative is positive and $\mathcal{C}$ is increasing in $\alpha$.

\subsection{Proof of Theorem \ref{thm:hte_poa}}

To compute the system processing cost of the $c\mhyphen\mu$ rule,
consider the following M/M/1 queue models:
Class 1 jobs have the highest priority in the system 
and themselves form an M/M/1 queue with parameter $(\lambda_1, \mu)$, 
therefore basic M/M/1 queue result \cite{kle75} implies the expected number of class 1 jobs in the system is
$\bE[N_1]=\rho_1/(1-\rho_1)$.
Then Little's law implies the expected processing time for class 1 job is $\bE[N_1]/\lambda_1$.
If we further consider both class 1 and class 2 jobs, they are not preempted by any other jobs in the system,
therefore they form yet another M/M/1 queue with parameter $(\lambda_1+\lambda_2, \mu)$,
and thus the expected number of class 2 jobs in the system is
$\bE[N_2]=(\rho_1+\rho_2)/(1-\rho_1-\rho_2)-\bE[N_1]$.
In general, the expected number of class $i$ jobs in the system is
$\bE[N_i]=\frac{\sum_{j=1}^i\rho_j}{1-\sum_{j=1}^i\rho_j}-\frac{\sum_{j=1}^{i-1}\rho_j}{1-\sum_{j=1}^{i-1}\rho_j}$. Finally, the system processing cost of the system with $c\mhyphen\mu$ rule is 
\begin{equation}\label{eq:opt_efficiency}
\mathcal{C}^{opt}=\sum_{i=1}^Kc_i\lambda_i(\bE[N_i]/\lambda_i)=\sum_{i=1}^Kc_i\left(\frac{\sum_{j=1}^i\rho_j}{1-\sum_{j=1}^i\rho_j}-\frac{\sum_{j=1}^{i-1}\rho_j}{1-\sum_{j=1}^{i-1}\rho_j}\right).
\end{equation}

To bound the PoA, we first note that $c_i$ is decreasing in $i$
and the expected number of all jobs in the system is $\rho/(1-\rho)$, therefore
$\mathcal{C}^{opt} > c_K \sum_{i=1}^K \bE[N_i] = c_K\rho/(1-\rho)$.
On the other hand, let $c'_i=c_i/c_K$ and $\lambda'_i=\lambda_i/\lambda_K$ for $i=1, \cdots, K-1$,
then 
\begin{equation}\label{eq:opt_eff_bound_2}
\frac{\mathcal{C}}{\mathcal{C}^{opt}}<\frac{\mathcal{C}(1-\rho)}{c_K\rho}=
\frac{  \sum_{i=1}^{K-1} \lambda'_i{c'}_i^{\frac{\alpha}{\alpha+1}}  +1  }
{  \sum_{i=1}^{K-1}  \lambda'_i{c'}_i^{-\frac{1}{\alpha+1}}  +1  }
<  \left(\sum_{i=1}^{K-1}\lambda'_i \right)  {c'}_1^{\frac{\alpha}{\alpha+1}}+1.
\end{equation}
Therefore $\dfrac{\mathcal{C}}{\mathcal{C}^{opt}}<\dfrac{\lambda-\lambda_K}{\lambda_K}\left(\dfrac{c_1}{c_K}\right)^{\frac{\alpha}{\alpha+1}}+1$.

We note that there exist some systems in which the PoA of HTE can be made arbitrarily large. Here is an example. Let $\lambda_i$'s all equal, set $c_i=mc_K$ for $i=1, \cdots, K-1$ and $\rho=1-m^{-2}$. Then $\sum_{j=1}^i\rho_j<1-\frac{1}{K}$ for $i<K$ and $\rho/(1-\rho)=m^2-1$. We have that
\[
\mathcal{C}^{opt}<c_K(m^2+mK), \ \ \mathcal{C}=\frac{c_K\rho}{(1-\rho)}\frac{(K-1)m^{\frac{\alpha}{\alpha+1}}+1}{(K-1)m^{-\frac{1}{\alpha+1}}+1}.
\]
Hence,
\[
\frac{\mathcal{C}}{\mathcal{C}^{opt}}>\frac{(K-1)m^{\frac{\alpha}{\alpha+1}}+1}{(K-1)m^{-\frac{1}{\alpha+1}}+1}\frac{m^2-1}{m^2+mK}.
\]
Letting $m$ go to infinity makes the PoA arbitrarily large. Also, note that in this case the PoA bound is $(K-1)m^{\frac{\alpha}{\alpha+1}}+1$, which means that the PoA bound is ``asymptotically tight''.

\subsection{Proof of Theorem~\ref{thm:rev_alpha}}

Take the first derivative of $\mathcal{R}$ with respect to $\alpha$, we have that
\begin{align}\label{eq:R_derivative}
\frac{\partial\mathcal{R}}{\partial\alpha}&=\frac{\rho}{(1-\rho)\alpha^2}\left(\sum\lambda_ic_i^{-\frac{1}{\alpha+1}}\right)^{-2}\nonumber\\
&\times\left[\frac{\alpha}{(\alpha+1)^2}\sum_{i<j}
\lambda_i\lambda_j(c_i^{\frac{\alpha}{\alpha+1}}c_j^{-\frac{1}{\alpha+1}}
-c_j^{\frac{\alpha}{\alpha+1}}c_i^{-\frac{1}{\alpha+1}})\ln\frac{c_i}{c_j}
-\sum_{i,j}\lambda_i\lambda_jc_i^{\frac{\alpha}{\alpha+1}}c_j^{-\frac{1}{\alpha+1}}\right].
\end{align}
If $c_1/c_K<e^4$, then $\max_{i,j}\ln(c_i/c_j)\leq 4\leq \frac{(\alpha+1)^2}{\alpha}$.
It follows from (\ref{eq:R_derivative}) that $\partial\mathcal{R}/\partial\alpha$ is negative
and $\mathcal{R}$ is decreasing in $\alpha$.

\subsection{Class Level Game and Approximation}

As what we did in 
Section \ref{sec:heavy},
we can define $W^{HT}_i(\bsy{\beta})=1/\mu(1-\rho)\beta_i\gamma(\bsy{\beta})$.
It then follows from 
Theorem~\ref{thm:heavy}
and Little's law that
\[
\lim_{\rho\rightarrow1}(1-\rho)W_i(\bsy{\beta})=\lim_{\rho\rightarrow1}\frac{\bE[N_i]}{\lambda_i}=\frac{1}{\lambda\beta_i\gamma(\bsy{\beta})}.
\]
Therefore $W^{HT}$ is asymptotically exact: as $\rho\rightarrow1$,
\[
(1-\rho)[W^{HT}_i(\bsy{\beta})-W_i(\bsy{\beta})] \rightarrow 0.
\]
Similarly, we can define $\boldsymbol{\beta} =
(\beta_1, \cdots, \beta_K)$ as a class level heavy traffic equilibrium if $\forall i = 1, \cdots, K,$
\begin{align*}
\beta_i&=\arg\min_{\beta>0}~c_iW_i^{HT}(\beta_1, \cdots , \beta_{i-1}, \beta, \beta_{i+1}, \cdots, \beta_K)+\beta^{\alpha}\\
&=\arg\min_{\beta>0}~\frac{c_i \rho}{\mu(1-\rho)}\frac{1}{\beta}\left[\sum_{j\neq i}\frac{\rho_j}{\beta_j}+\frac{\rho_i}{\beta}\right]^{-1}+\beta^{\alpha}.
\end{align*}
We can obtain a system of {\em non-linear} equations to compute this class level heavy traffic equilibrium, but we are not able to get any closed form expressions for it, mainly due to the existence of intra-class externality: $\gamma(\bsy{\beta})$ depends on 
all priorities of all classes, including the class who is optimizing.
This is fundamentally different from the job level game, 
where for any individual job, 
its own priority is infinitesimal and does not affect the auxiliary variable $\gamma(\bsy{\beta})$.

To eliminate the externality within each class, 
we consider a limiting model where the number of
classes approaches infinity. 
Then each class as a whole is infinitesimal in the system,
and hence, the system stays almost the same even if we exclude a whole
class.
Next we formalize this idea. We consider a series of systems indexed by
$K$ satisfying the following properties.

\begin{definition}
A {\bf limiting class level game} consists of the following elements.
\begin{enumerate}
\item There are $K$ classes in system $K$, each with cost $c_k^K$
arrival rate $\lambda_k^K$, $k=1, \cdots, K$, and service
requirement rate $\mu^K$.
\item The costs $c_k^K$ are i.i.d.~samples from a distribution
  $F(\cdot)$ with a positive and closed support $S_c$.
\item The arrival rates $\lambda_k^K$ are i.i.d~samples from a
  distribution $G(\cdot)$ with a positive and closed support
  $S_{\lambda}$ (independent of the costs).  Let $\bE_G[\lambda] =
  \int_{S_\lambda} \lambda dG$, the limiting mean arrival rate.
\item The service rate $\mu^K$ in system $K$ is chosen to ensure that
  the system is stable, i.e., $\rho^K \triangleq \sum_{k = 1}^K
  \lambda_k^K/\mu < 1$.
\item $\lim_{K \to \infty} \mu^K/K$ exists; denoting this limit by
  $\mu$, we assume that $\bE_G[\lambda] < \mu$.
\item Feasible priority levels are bounded by a positive and closed
support $S_{\beta}=[\overline{\beta}, \underline{\beta}]$.
\end{enumerate}
\end{definition}
As $K\rightarrow\infty$, the limit of these systems consists of a
continuum of classes with independent cost and arrival rate
distributions as $F$ and $G$, and with limiting per class service rate
$\mu$.

Suppose we are given a limiting class level game.  Let $B:S_c\rightarrow
\reals^+$ denote a {\em priority strategy function}, that maps the unit cost of
a class (and its jobs) to the priority level chosen by this class
(and its jobs).  Let $V(\beta; B, F, G, \mu)$ be the expected
steady state processing time of a job with priority $\beta$, processed
with a continuum of classes of jobs characterized by priority
strategy function $B$, cost distribution $F$, arrival rate
distribution $G$, and per class service rate $\mu$. Then the symmetric
Nash equilibrium for this continuum game is a strategy function
$B(\cdot)$ such that:
\[
B(c)=\arg\min_{\beta\geq 0}~ c V(\beta; B , F,
G)+\beta^{\alpha},\ \ \forall\ c\in S_c.
\]
Given the complexity of $V$ in general, this cannot be solved in
closed form.

\subsubsection{Approximate Processing Time}

Let us define 
\begin{equation}\label{eq:cl:betaht}
\gamma(B)=\int_{S_c}\int_{S_\lambda}\lambda/(\mu\rho B(c))dFdG=\int_{S_c}B(c)^{-1}dFdG,
\end{equation}
where the last equality comes from the independence between $\lambda$ and $c$, and define
\begin{equation}\label{eq:cl:vht}
V^{HT}(\beta; B)=\frac{1}{\mu(1-\rho)}\frac{1}{\gamma(B)\beta}.
\end{equation}
Then we have
similar approximation result in heavy traffic for the limiting class level game, which shows that $V^{HT}$ is asymptotically exact.

\begin{proposition}\label{thm:approx2}
Suppose we are given a limiting
class level game, and any positive strategy function $B : S_c\rightarrow
S_{\beta}$.  Then for all $\beta > 0$:
\[
\lim_{\rho \to 1} (1 - \rho) V^{HT}(\beta, B)=\lim_{K
  \to \infty} \lim_{\rho^K \to 1} (1-\rho^K)W_1(\beta, B(c_2), \cdots,
B(c_K)).
\]
\end{proposition}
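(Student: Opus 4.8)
The plan is to evaluate both sides of the asserted identity explicitly and show they agree. The left-hand side is immediate from the definition (\ref{eq:cl:vht}): since $\gamma(B)=\int_{S_c}B(c)^{-1}dF$ does not depend on the load, one has $\lim_{\rho\to 1}(1-\rho)V^{HT}(\beta,B)=1/(\mu\beta\gamma(B))$. The real work lies in the double limit on the right, and the strategy is to resolve it from the inside out: first the heavy-traffic limit $\rho^K\to 1$ for a fixed number of classes $K$, and then the limit $K\to\infty$.

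For the inner limit I would apply the finite-class heavy-traffic analysis leading to (\ref{eq:apporx:vbeta}), which itself rests on Theorem~\ref{thm:heavy}. Writing the class priority vector as $\bsy{\beta}=(\beta,B(c_2),\ldots,B(c_K))$ and recognizing that $W_1(\beta,B(c_2),\ldots,B(c_K))=V(\beta;\bsy{\beta})$, the asymptotic exactness of $V^{HT}$ together with its explicit form gives
\[
\lim_{\rho^K\to 1}(1-\rho^K)\,W_1(\beta,B(c_2),\ldots,B(c_K))=\frac{1}{\mu^K\beta\,\overline{\gamma}^K},\qquad \overline{\gamma}^K=\frac{\overline{\rho}_1^K}{\beta}+\sum_{i=2}^K\frac{\overline{\rho}_i^K}{B(c_i)},
\]
where the limiting loads inherit the arrival-rate proportions of the $K$-th system, namely $\overline{\rho}_i^K=\lambda_i^K/\sum_{j=1}^K\lambda_j^K$.

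For the outer limit I would invoke the strong law of large numbers, using that the pairs $(c_i,\lambda_i^K)$ are i.i.d.\ with $\lambda$ and $c$ independent, and that all relevant quantities are bounded because $B(c)\in S_\beta$ is bounded away from $0$ and $\infty$ and $\lambda\in S_\lambda$ is bounded. Two things must be checked. First, the tagged term $\overline{\rho}_1^K/\beta=(\lambda_1^K/\sum_j\lambda_j^K)/\beta$ vanishes almost surely, since its numerator is a single bounded sample while its denominator grows linearly in $K$; this is exactly the statement that one class becomes infinitesimal and the intra-class externality disappears in the limit. Second, normalizing numerator and denominator by $K$ and applying the SLLN yields $\tfrac1K\sum_i\lambda_i^K/B(c_i)\to\bE_G[\lambda]\,\bE_F[B(c)^{-1}]=\bE_G[\lambda]\gamma(B)$ and $\tfrac1K\sum_j\lambda_j^K\to\bE_G[\lambda]$, so that $\overline{\gamma}^K\to\gamma(B)$ almost surely. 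Tracking the per-class service-rate normalization $\mu^K/K\to\mu$ from the definition of the limiting game then matches the inner-limit expression to $1/(\mu\beta\gamma(B))$, which is the left-hand side.

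The step I expect to be the main obstacle is the honest bookkeeping of the two nested limits together with the service-rate normalization: one must express the heavy-traffic constant for finite $K$ so that the growing factors $\mu^K$ and $\sum_j\lambda_j^K$ combine correctly as $K\to\infty$, and one must verify that the almost-sure SLLN convergence, and the negligibility of the single tagged class, can legitimately be inserted after the $\rho^K\to 1$ limit has already been taken. The boundedness afforded by the compact supports $S_\beta$ and $S_\lambda$ is precisely what makes both the SLLN step and this interchange routine rather than delicate.
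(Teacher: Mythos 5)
Your overall strategy is the same as the paper's: evaluate the left side directly from the definition of $V^{HT}$, evaluate the inner limit on the right via Theorem~\ref{thm:heavy} together with Little's law, and then send $K\to\infty$ using the strong law of large numbers, with the tagged class's contribution to $\overline{\gamma}^K$ vanishing because a single bounded arrival rate is divided by an aggregate that grows linearly in $K$. Up to that point your argument is sound and tracks the paper's proof step for step.

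The gap sits exactly at the step you flagged as the ``main obstacle,'' and it does not resolve the way you assert. Your inner-limit expression is $1/(\mu^K\beta\,\overline{\gamma}^K)$ with $\overline{\gamma}^K\to\gamma(B)$ almost surely; since $\mu^K/K\to\mu>0$, that expression tends to $0$ as $K\to\infty$, not to $1/(\mu\beta\gamma(B))$. There is an uncancelled factor of $K$: the scaled processing time of one job in system $K$ is of order $1/\mu^K\sim 1/(K\mu)$, whereas $(1-\rho)V^{HT}(\beta,B)=1/(\mu\beta\gamma(B))$ is order one because the $\mu$ appearing there is the \emph{per-class} service rate. The paper's own proof sidesteps this by establishing the identity in the normalization $(1-\rho)\lambda V^{HT}$ on the left and $(1-\rho^K)\lambda^K W_1$ on the right (its equations (\ref{eq:limit2:v2}) and (\ref{eq:limit2:v3})), under which both sides equal $\left[\beta\int_{S_c}B(c)^{-1}dF\right]^{-1}$; the multiplier $\lambda^K\sim K\,\bE_G[\lambda]$ on the right is precisely what absorbs the stray factor of $K$. (The displayed proposition omits these multipliers, so the literal unscaled identity you set out to prove is not what the argument can deliver; to close your proof you must insert the $\lambda$ and $\lambda^K$ scalings, or equivalently renormalize $W_1$ by the aggregate arrival rate before taking $K\to\infty$, rather than appealing to $\mu^K/K\to\mu$.)
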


\begin{proof}


By definition (\ref{eq:cl:betaht})  and (\ref{eq:cl:vht})we have
\begin{equation}\label{eq:limit2:v2}
\lim_{\rho\to 1}(1-\rho)\lambda V^{HT}(\beta;
B)=\left[\beta\int_{S_c}B(c)^{-1}dF\right]^{-1}.
\end{equation}

On the other hand, it follows from 
Theorem~\ref{thm:heavy} 
and
Little's law that
\[
\lim_{\rho^K\rightarrow1}(1-\rho^K)\lambda^K W_1(\beta,
B(c_2), \cdots, B(c_K))=\frac{1}{\gamma^K(\beta)\beta}.
\]
where
$\ds \gamma^K(\beta)=\frac{\rho_1}{\rho\beta}+\sum_{i=2}^K\frac{\rho_i}{\rho B(c_i)}$.
The strong law of large numbers implies
\beqn
\lim_{K\rightarrow\infty}\gamma^K(\beta)
=\lim_{K\rightarrow\infty}\left(\frac{\lambda_1}{\lambda^K\beta}+\sum_{i=2}^K\frac{\lambda_i}{\lambda^KB(c_i)}\right)
=\int_{S_c}B(c)^{-1}dF.
\eeqn
Therefore,
\begin{equation}\label{eq:limit2:v3}
\ds \lim_{K\rightarrow\infty}\lim_{\rho^K\rightarrow1}(1-\rho^K)\lambda^K W_1(\beta,B(c_2), \cdots,B(c_K))
=\left[\beta\int_{S_c}B(c)^{-1}dF\right]^{-1}.
\end{equation}
Equations (\ref{eq:limit2:v2}) and (\ref{eq:limit2:v3}) together complete the proof.

\end{proof}



\subsubsection{Class Level Equilibrium}

Inspired by heavy traffic equilibrium for the job level game,
we can also define the class level heavy traffic equilibrium
for the limiting class level game.

\begin{definition}
\label{def:cl_AE}
Suppose we are given a limiting class level game. A
{\bf class level heavy traffic equilibrium} is characterized by a positive
priority strategy function $B$
\begin{equation}
B(c)=\arg\min_{\beta\geq0}\left(c V^{HT}(\beta, B)+\beta^{\alpha}\right), \quad \forall c\in S_c,
\end{equation}
where $V^{HT}(\beta,B)$ is defined as in (\ref{eq:cl:vht}), with $\rho
= \bE_G[\lambda]/\mu$.
\end{definition}

Comparing this definition to that in the finite case, we
note that the summation over all classes is replaced by integration
over the support space,
and individual strategies are replaced by a strategy function,
since now we have a continuum of classes.

This class level heavy traffic equilibrium can also be solved in closed form.
\begin{proposition}
\label{thm:cl_eq}
The class level heavy traffic equilibrium for a limiting class level game  always exists and is unique. Moreover, it can be calculated in closed form as follows:
\begin{equation}\label{eq:cl:opt_beta}
B(c)=c^{\frac{1}{\alpha+1}}(\mu\alpha(1-\rho)S_2)^{-\frac{1}{\alpha}},
\end{equation}
where $S_2=\int_{S_c}c^{-\frac{1}{\alpha+1}}dF$ is independent of $\rho$.
\end{proposition}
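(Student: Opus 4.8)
The plan is to mirror the proof of Theorem~\ref{thm:ht_equi}, exploiting the fact that in the limiting class level game the aggregate $\gamma(B)=\int_{S_c}B(c)^{-1}dF$ is insensitive to the priority chosen by any single (infinitesimal) class. Indeed, $V^{HT}(\beta,B)$ in \eqref{eq:cl:vht} depends on the strategy function $B$ only through the scalar $\gamma(B)$, and since an individual class has measure zero, a deviation by the class of cost $c$ alone leaves $\gamma(B)$ unchanged; this is precisely the absence of intra-class externality in the continuum limit that was contrasted with the finite game. Consequently, holding $\gamma(B)$ fixed, the best response of the class with cost $c$ solves
\[
B(c)=\arg\min_{\beta>0}\left(\frac{c}{\mu(1-\rho)\gamma(B)\,\beta}+\beta^{\alpha}\right),
\]
which is a one-dimensional convex-in-the-relevant-sense problem identical in form to the job level case.

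First I would write down the first-order condition, yielding the unique stationary point
\[
B(c)=c^{\frac{1}{\alpha+1}}\bigl(\alpha\mu(1-\rho)\gamma(B)\bigr)^{-\frac{1}{\alpha+1}},
\]
and then verify, exactly as in Theorem~\ref{thm:ht_equi}, that the second derivative of the objective is strictly positive there, so this is the unique global minimizer over $\beta>0$. This reduces the search for an equilibrium strategy function to a single scalar consistency condition on $\gamma(B)$. Substituting the expression for $B(c)$ into $\gamma(B)=\int_{S_c}B(c)^{-1}dF$ and pulling the $\gamma(B)$-dependent factor out of the integral collapses the problem to
\[
\gamma(B)=\bigl(\alpha\mu(1-\rho)\gamma(B)\bigr)^{\frac{1}{\alpha+1}}S_2,\qquad S_2=\int_{S_c}c^{-\frac{1}{\alpha+1}}dF.
\]
Because $S_c$ is a positive closed (hence compact, bounded away from $0$) support, $S_2$ is finite and strictly positive. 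Solving this scalar equation gives $\gamma(B)=(\alpha\mu(1-\rho))^{\frac{1}{\alpha}}S_2^{\frac{\alpha+1}{\alpha}}$, and plugging this back into the expression for $B(c)$ and simplifying the exponents recovers the claimed closed form $B(c)=c^{\frac{1}{\alpha+1}}(\mu\alpha(1-\rho)S_2)^{-\frac{1}{\alpha}}$ of \eqref{eq:cl:opt_beta}.

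The algebra above is routine; the only genuine point requiring care is existence and uniqueness. For uniqueness I would note that the map $\gamma\mapsto(\alpha\mu(1-\rho)\gamma)^{\frac{1}{\alpha+1}}S_2$ is strictly increasing and strictly concave with value $0$ at $\gamma=0$ (since $1/(\alpha+1)<1$), so it meets the identity line at exactly one strictly positive point; this pins down $\gamma(B)$ uniquely, and the per-class minimizer is then uniquely determined, giving a unique equilibrium $B$. I expect the main conceptual obstacle to be the justification that $\gamma(B)$ is unaffected by a single class's deviation, so that the per-class optimization is legitimate; this is exactly what the infinitesimal-class limit underlying Proposition~\ref{thm:approx2} provides, and it is the feature that makes the limiting class level game tractable where the finite one is not.
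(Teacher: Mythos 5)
Your proof is correct and is exactly what the paper intends: the paper omits the argument, stating only that it is ``a similar repetition of the proof for Theorem~\ref{thm:ht_equi},'' and your derivation carries out that repetition faithfully --- first-order condition per class, second-derivative check, and the scalar fixed-point equation for $\gamma(B)$ with the sum replaced by an integral, yielding \eqref{eq:cl:opt_beta}. Your added remarks on why an infinitesimal class's deviation leaves $\gamma(B)$ unchanged, and on uniqueness of the fixed point of the concave map, are sound and in fact make the argument more complete than the paper's.
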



The proof is just a similar repetition of the proof for 
Theorem~\ref{thm:ht_equi} 
and is omitted. 

Moreover, we can relate the HTE of the class
level game to the HTE of the job level game in a
series of finite systems: we show that a class level
game with infinitely many classes behaves like a job level game,
validating that the in-class externality has been mitigated.
If $K$ is fixed, system $K$ is a finite class system. The optimal
strategies in heavy traffic equilibrium are given by 
(\ref{eq:opt_beta_tilde}).
Now send $K\rightarrow\infty$, we find that the series of heavy 
traffic equilibrium strategies converges almost surely and the 
limit coincides with the corresponding strategy given by  the class level
heavy traffic equilibrium  function $B^*$, which can be easily verified 
by applying the strong law of large numbers in 
(\ref{eq:opt_beta_tilde}).

\begin{proposition}\label{thm:large_number}
Suppose we are given a limiting class level game.
Let $\beta_i^{K}$ denote the strategy of a class $i$ job used in heavy traffic equilibrium in the $K$th system,
and $B(\cdot)$ be the equilibrium strategy function used in the limiting class level game, then
\[
\lim_{K \to \infty} \left(\beta^{K}_i-B(c_i^K)\right)=0.
\]
\end{proposition}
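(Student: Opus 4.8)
The plan is to compare the two closed-form expressions directly, rather than re-derive convergence from scratch. By Theorem~\ref{thm:ht_equi}, the finite-$K$ job level HTE strategy is
\[
\beta_i^K = (c_i^K)^{\frac{1}{\alpha+1}}\left[\alpha(1-\rho^K)(\rho^K)^{-1}S_1^K\right]^{-\frac{1}{\alpha}},
\]
where $S_1^K = \sum_{k=1}^K \lambda_k^K (c_k^K)^{-\frac{1}{\alpha+1}}$, while Proposition~\ref{thm:cl_eq} gives the limiting class level equilibrium
\[
B(c) = c^{\frac{1}{\alpha+1}}\left(\mu\alpha(1-\rho)S_2\right)^{-\frac{1}{\alpha}},
\]
with $S_2 = \int_{S_c} c^{-\frac{1}{\alpha+1}}dF$ and $\rho = \bE_G[\lambda]/\mu$. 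Both have the same $c^{1/(\alpha+1)}$ factor, so the per-class prefactor $(c_i^K)^{1/(\alpha+1)}$ cancels in the difference once I evaluate $B$ at the matched cost $c_i^K$. The entire problem therefore reduces to showing that the common (cost-independent) multipliers converge as $K\to\infty$.

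First I would rewrite $(1-\rho^K)(\rho^K)^{-1}S_1^K$ in terms of the service rate. Since $\rho^K = \sum_{k=1}^K \lambda_k^K/\mu^K$, we have $(\rho^K)^{-1}S_1^K = \mu^K \big(\sum_k \lambda_k^K\big)^{-1}\sum_k \lambda_k^K (c_k^K)^{-1/(\alpha+1)}$. The key step is to apply the strong law of large numbers to the two empirical averages $\frac{1}{K}\sum_{k=1}^K \lambda_k^K$ and $\frac{1}{K}\sum_{k=1}^K \lambda_k^K (c_k^K)^{-1/(\alpha+1)}$: because the $\lambda_k^K$ are i.i.d.\ from $G$ and the $c_k^K$ are i.i.d.\ from $F$ (independent of the arrival rates), these converge almost surely to $\bE_G[\lambda]$ and $\bE_G[\lambda]\,S_2$ respectively, the latter factoring by independence. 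Combined with the hypothesis that $\mu^K/K \to \mu$, this yields $\mu^K\big(\sum_k\lambda_k^K\big)^{-1}\sum_k\lambda_k^K(c_k^K)^{-1/(\alpha+1)} \to \mu S_2$ almost surely, and hence $\alpha(1-\rho^K)(\rho^K)^{-1}S_1^K \to \alpha(1-\rho)\mu S_2$, matching the multiplier inside $B$.

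Once the multipliers agree in the limit, I would finish by writing the difference as a single cost factor times the difference of the two bracketed powers,
\[
\beta_i^K - B(c_i^K) = (c_i^K)^{\frac{1}{\alpha+1}}\left(\left[\alpha(1-\rho^K)(\rho^K)^{-1}S_1^K\right]^{-\frac{1}{\alpha}} - \left[\mu\alpha(1-\rho)S_2\right]^{-\frac{1}{\alpha}}\right),
\]
and invoking continuity of $x \mapsto x^{-1/\alpha}$ on the positive reals together with the fact that $(c_i^K)^{1/(\alpha+1)}$ stays bounded (since costs lie in the compact positive support $S_c$). The bracketed difference tends to $0$ almost surely and the prefactor is bounded, so the product tends to $0$, giving the claim.

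The main obstacle is bookkeeping around the double scaling. There are genuinely two limits present, $\rho^K \to 1$ and $K \to \infty$, whereas the statement to be proved is a single limit in $K$; I need to be careful that the finite-$K$ HTE formula already holds for each $K$ without a separate heavy-traffic passage, so that the only limit actually taken here is $K\to\infty$ with $\rho^K$ determined by $\mu^K$. The second delicate point is justifying that $\rho^K \to \rho = \bE_G[\lambda]/\mu$ is consistent with the stability and scaling assumptions (items 4 and 5 of the limiting-game definition), so that the target multiplier $\mu\alpha(1-\rho)S_2$ is exactly the one appearing in $B$; this is where the hypotheses $\lim_K \mu^K/K = \mu$ and $\bE_G[\lambda] < \mu$ are used to pin down the correct limiting load.
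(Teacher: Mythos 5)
Your overall strategy coincides with the paper's: the paper's entire argument for Proposition~\ref{thm:large_number} is the remark that the claim ``can be easily verified by applying the strong law of large numbers in (\ref{eq:opt_beta_tilde})'', and you carry out exactly that comparison of the two closed forms, correctly isolating the cost-independent multipliers and correctly computing the almost sure limits of the empirical averages $\frac{1}{K}\sum_k\lambda_k^K \to \bE_G[\lambda]$ and $\frac{1}{K}\sum_k\lambda_k^K(c_k^K)^{-1/(\alpha+1)} \to \bE_G[\lambda]S_2$.

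The gap is in the step where you combine these limits with the service rate: you assert that $\mu^K\bigl(\sum_k\lambda_k^K\bigr)^{-1}\sum_k\lambda_k^K(c_k^K)^{-1/(\alpha+1)}\to\mu S_2$. The ratio of the two sums does converge to $S_2$ (each sum grows linearly in $K$, and the factors of $K$ cancel in the ratio), but the prefactor multiplying that ratio is $\mu^K$, not $\mu^K/K$; since $\mu^K/K\to\mu>0$ we have $\mu^K\to\infty$, so the quantity you need actually diverges like $K\mu S_2$. Consequently the bracket $\alpha(1-\rho^K)(\rho^K)^{-1}S_1^K$ in (\ref{eq:opt_beta_tilde}) blows up, $\beta_i^K$ behaves like $K^{-1/\alpha}B(c_i^K)\to 0$, and the difference $\beta_i^K-B(c_i^K)$ tends to $-B(c_i^K)$, which is bounded away from zero because $c_i^K$ lies in the compact positive support $S_c$. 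What your computation exposes (and silently papers over at the one critical moment, despite your own warning about ``bookkeeping around the double scaling'') is a normalization mismatch between the two formulas being compared: (\ref{eq:opt_beta_tilde}) is written in terms of the total service rate $\mu^K\sim K\mu$ of the $K$th system, while (\ref{eq:cl:opt_beta}) uses the per-class rate $\mu=\lim_K\mu^K/K$. A correct argument must either rescale the finite-system bids (proving $K^{1/\alpha}\beta_i^K-B(c_i^K)\to 0$ almost surely) or restate $B$ with the system-level service rate; as written, the claimed convergence of the multiplier fails and the proof does not go through.
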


\subsection{Network Model}
In this section, we provide a description of a network generalization
of our model.  Consider a setting with $J$ resources, and $K$
classes.  Jobs of class $i$ arrive at rate $\lambda_i$.  Each class
requires service from a subset of the resources; 
in particular, let $r_i$ denote the subset of resources that are used
by a job of class $i$.  Each job generates an exponentially
distributed workload with mean $1/\mu_{j}$ at resource $j$; let
$\rho_{ij} = \lambda_i/\mu_j$ be the traffic intensity of class $i$
at resource $j$.

We assume that each resource operates as an {\em independent market}.
In other words, each job bids independently at each resource, and each
resource uses the DPS policy to allocate resources to jobs.  Let
$\beta_{ij}$ be the bid of a class $i$ job at resource $j$; for
simplicity in this section, we assume that the payment of the job to
that resource is equal to $\beta_{ij}$ (i.e., that $\alpha = 1$ at
each resource.

Finally, in the model we consider, we assume that each job {\em
  simultaneously} requires service from each of the resources it
demands.  We assume that each resource is sensitive to its {\em
  maximum} waiting time across the resources.  This might be a
reasonable model if, for example, the resources correspond to
resources used to farm out parallelized jobs; in that case, the user
would be sensitive to the completion time of the slowest job run on
the resources they demand.

We have the following equivalent definition of a Nash equilibrium.

\begin{definition}\label{def:jobeqnet}
A {\bf Nash equilibrium} of the network game
consists of a class priority vector $\boldsymbol{\beta}=(\beta_{ij}, i =
1,\ldots, K; j \in r_i)$ such that
\begin{equation}\label{eq:def:jobeqnet}
\v{\beta}_i =\arg\min_{\v{\beta}>0}\left[c_i \max_{j \in r_i} \{ V_{j}(\beta_{ij};
\boldsymbol{\beta}^{(j)})\} +\sum_{j \in r_i}\beta_{ij}\right],\ \forall\ i=1, \cdots, K,
\end{equation}
where $\v{\beta}^{(j)} = (\beta_{kj}, k$ such that $j \in r_k)$ is the
class priority vector of jobs that use resource $j$.
\end{definition}

Here $V_{j}(\beta ; \v{\beta}^{(j)})$ is the waiting time in a DPS system for a
job with priority $\beta$ at
resource $j$, when the class priority vector at resource $j$ is
$\v{\beta}^{(j)}$, as before.  We can analogously define a {\em heavy 
  traffic equilibrium} (HTE) of the network game, by replacing $V_{j}$
by $V_{j}^{HT}$.  Note that this notion is formally justified if we consider a heavy
traffic limit where
$\rho_{ij}$ converges to a limit $\bar{\rho}_{ij}$, such that $\sum_{i
  : j \in i} \bar{\rho}_{ij} = 1$ for all $j$.  (In particular this
corresponds to the limit where {\em every} resource approaches
to heavy traffic simultaneously.)

If we let $\bar{V}_{j}$ denote the waiting time of a particular class
$i$ job at
resource $j$, then observe that $\max_{j \in 
r_i} \{ \bar{V}_{j} \}$ is a convex function of 
$(\bar{V}_{j}, j \in r_i)$.  As a result, the objective function of user
$i$ in the definition of heavy traffic equilibrium can be shown to be
convex, and so by standard arguments it is straightforward to show
that a HTE exists for this game; for brevity we omit the details.
Unfortunately, due to the complexity of the network setting, it is not
possible in general to establish either uniqueness of the equilibrium
or compute the equilibrium in closed form.

However, we can use our earlier results to obtain a bound on the price
of anarchy in this model.  We require one additional piece of
notation.  Let $b_{j}(\bar{V}_{j}; \v{\beta}^{(j)})$ be the value of
$\beta$ that ensures that the heavy traffic waiting time of a job at
resource $j$ is $\bar{V}_{j}$, when the class priority vector of other jobs
at resource $j$ is $\v{\beta}^{(j)}$.  In other words, let $b_{j}(\bar{V}_{j},
\v{\beta}^{(j)})$ be the
solution $\beta$ to:
\[ \bar{V}_{j} = V_j^{HT}(\beta ; \v{\beta}^{(j)}). \]
Though $b_{j}(\cdot)$ can be computed in closed form, the solution is
tedious and not particularly insightful.  For our purposes, all we
require is that it is convex, decreasing, and differentiable in
$\bar{V}_{j} > 0$.

We can then prove the following theorem.
\begin{theorem}
\label{th:networkpoa}
Suppose $\lambda_i = \lambda$ for all $i$.  Let $\v{\beta}^*$ be an HTE
of the network game, and let $V_{ij}^* = V_j(\beta_{ij}^*,
\v{\beta}^{(j)*})$.  Further, define:
\[ b_{ij}' = \frac{\partial b_j}{\partial \bar{V}_j}( V_{ij}^*,
\v{\beta}^{(j)*}).\]
Then the price of anarchy, i.e., the ratio of the 
HTE processing cost to the minimal system processing cost is bounded above by:
\[ (K-1) \max_j \sqrt{ \frac{\max_{i : j \in r_i}
    (-b_{ij}')}{\min_{i : j \in r_i} (-b_{ij}') }}+1. \]
\end{theorem}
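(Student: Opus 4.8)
The plan is to reduce the network price of anarchy to the single-resource bound of Theorem~\ref{thm:hte_poa} by identifying, at each resource $j$, an \emph{effective cost} $-b'_{ij}$ for class $i$ that plays exactly the role $c_i$ plays in the single-resource game. The first step is to justify this identification. Fix a class $i$ and examine its HTE best response in Definition~\ref{def:jobeqnet} (with $V_j$ replaced by $V^{HT}_j$). Since each $V^{HT}_j(\cdot;\v{\beta}^{(j)})$ is continuous and strictly decreasing in the bid, at an optimum the job cannot leave $V^{HT}_{ij_0}$ strictly below the maximum on some $j_0\in r_i$: lowering $\beta_{ij_0}$ raises $V^{HT}_{ij_0}$ toward the maximum, leaving $\max_{j\in r_i}V^{HT}_{ij}$ unchanged while strictly decreasing the payment $\sum_{j\in r_i}\beta_{ij}$. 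Hence all the waiting times of class $i$ coincide at the HTE, say $V^*_{ij}=\bar V_i$ for every $j\in r_i$; writing $\beta_{ij}=b_j(\bar V_i;\v{\beta}^{(j)*})$, the objective collapses to $c_i\bar V_i+\sum_{j\in r_i}b_j(\bar V_i;\v{\beta}^{(j)*})$, which is convex in $\bar V_i$ (each $b_j$ is convex and decreasing). Its first-order condition is therefore necessary and sufficient and reads
\[
c_i=\sum_{j\in r_i}\bigl(-b'_{ij}\bigr),\qquad b'_{ij}=\frac{\partial b_j}{\partial\bar V_j}\bigl(V^*_{ij};\v{\beta}^{(j)*}\bigr)<0,
\]
so the single cost $c_i$ splits additively across resources into the nonnegative effective costs $-b'_{ij}$. (Here I use that in the job-level heavy-traffic objective the individual bid does not affect $\gamma(\cdot)$, so $b_j$ is a fixed function during the best-response computation.)

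Next I would decompose the HTE processing cost. Using $V^*_{ij}=\bar V_i$ for $j\in r_i$ and the splitting identity above, with $\lambda_i=\lambda$,
\[
\mathcal{C}=\lambda\sum_{i}c_i\,\bar V_i=\sum_{j}\;\sum_{i:\,j\in r_i}\lambda\bigl(-b'_{ij}\bigr)V^*_{ij}\;=\;\sum_j\mathcal{C}_j ,
\]
where $\mathcal{C}_j$ is the processing cost at resource $j$ of an \emph{isolated} single-resource DPS game ($\alpha=1$, arrival rates $\lambda$) whose class costs are $-b'_{ij}$. A short computation from \eqref{eq:vht} and \eqref{eq:opt_beta_tilde} shows the network bids $\beta^*_{ij}$ restricted to resource $j$ are precisely the HTE bids of this isolated game (both are proportional to the square root of the class cost and fix $\gamma$ through the same self-consistent equation), so $\mathcal{C}_j$ really is a single-resource HTE cost. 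For the denominator I would lower-bound the optimum: for any feasible scheduling with waiting times $V_{ij}$,
\[
\sum_i \lambda c_i \max_{j\in r_i} V_{ij}\;\ge\;\sum_i \lambda \sum_{j\in r_i}(-b'_{ij}) V_{ij}\;=\;\sum_j \sum_{i:\,j\in r_i}\lambda(-b'_{ij})V_{ij}\;\ge\;\sum_j \mathcal{C}^{opt}_j,
\]
using $-b'_{ij}\ge0$ and the fact that the waiting times at distinct resources are governed by independent queues; here $\mathcal{C}^{opt}_j$ is the minimal single-resource ($c\mhyphen\mu$) processing cost at resource $j$ with costs $-b'_{ij}$. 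Minimizing over scheduling gives $\mathcal{C}^{opt}\ge\sum_j\mathcal{C}^{opt}_j$.

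Finally I would combine the two displays with the elementary inequality $\frac{\sum_j \mathcal{C}_j}{\sum_j \mathcal{C}^{opt}_j}\le\max_j\frac{\mathcal{C}_j}{\mathcal{C}^{opt}_j}$ (valid since each $\mathcal{C}^{opt}_j>0$), so that $\mathcal{C}/\mathcal{C}^{opt}\le\max_j \mathcal{C}_j/\mathcal{C}^{opt}_j$. Each ratio is a single-resource price of anarchy with equal arrival rates, hence by Theorem~\ref{thm:hte_poa} (taking $\alpha=1$, $\lambda_i=\lambda$, so that $(\lambda-\lambda_K)/\lambda_K=K_j-1$ with $K_j=|\{i:j\in r_i\}|$) it is bounded by $(K_j-1)\sqrt{\max_{i}(-b'_{ij})/\min_{i}(-b'_{ij})}+1$. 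Bounding $K_j-1\le K-1$ and taking the maximum over $j$ delivers the stated bound $(K-1)\max_j\sqrt{\max_{i:\,j\in r_i}(-b'_{ij})/\min_{i:\,j\in r_i}(-b'_{ij})}+1$.

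The main obstacle is the structural first step: establishing that the network HTE, restricted to a single resource, coincides \emph{exactly} with a single-resource HTE whose class costs are the derivatives $-b'_{ij}$. The ``equal waiting times'' reduction and the identity $c_i=\sum_{j\in r_i}(-b'_{ij})$ are the crux, and they lean on the convexity and monotonicity of $b_j$ together with the fact that the infinitesimal bid leaves $\gamma$ unchanged; these are what let Theorem~\ref{thm:hte_poa} be invoked verbatim. The remaining steps---the additive decomposition, the lower bound on $\mathcal{C}^{opt}$ via independence of the resource queues, and the max-of-ratios inequality---are then routine, and the hypothesis $\lambda_i=\lambda$ is needed only to collapse the single-resource bound into the clean factor $K_j-1$.
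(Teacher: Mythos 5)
Your proposal is correct and follows essentially the same route as the paper's proof sketch: both linearize the cost $c_i\max_{j\in r_i}V_{ij}$ around the equilibrium into the per-resource effective costs $-b_{ij}'$ with $\sum_{j\in r_i}(-b_{ij}')=c_i$ (via equalized waiting times and the first-order condition), observe that $\v{\beta}^{(j)*}$ is then the HTE of the isolated single-resource game with those costs, bound the network optimum below by the sum of single-resource optima, and invoke Theorem~\ref{thm:hte_poa} with $\alpha=1$. The only difference is presentational: you write out explicitly the steps (the equal-waiting-time argument, the mediant inequality, the reduction to $\sum_j(-b_{ij}')V_{ij}$) that the paper compresses into the surrogate cost $\hat{C}_i$ and a citation to \cite{Joh04}.
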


Due to space constraints, we only provide a proof sketch here.  Our
proof technique follows the analysis of the price of anarchy of the
network game in \cite{Joh04}.  In that paper, it is shown
that by a decomposition approach, the price of anarchy of a network
resource allocation game can be studied by reduction to the price of
anarchy of a collection of single resource games.  In particular, we
write the processing time cost function of a user as a function of her waiting times
as:
\[ C_i(\v{V}_i) = c_i \max_{j \in r_i}  V_{ij}. \]
Now consider a new game where the cost function of user $i$ is instead given by:
\[ \hat{C}_i(\v{V}_i) = \sum_{j \in r_i} (-b_{ij}') ( V_{ij} - V_{ij}^*) + c_i \max_{j\in r_i}
 V_{ij}^*. \]
This cost function is derived by linearizing around the waiting time
vector observed in equilibrium.  It has two important properties:
first, $C_i(\v{V}_i^*) = \hat{C}_i(\v{V}_i^*)$; and second, because
of convexity of the original cost function, the first order condition
for optimality at the equilibrium can be used to show that for any
$\v{V}_i$, there holds $\hat{C}_i(\v{V}_i) \leq C_i(\v{V}_i)$.  In
particular, the optimal system processing cost can only be lower under
the new cost function. 

Next, observe that in equilibrium, since a user minimizes
$C_i(\v{V}_i)+\sum_{j\in r_i}b_j(V_{ij}, \v{\beta}^{(j)})$, 
the directional derivative of
$C_i(\v{V}_i^*)$ in the direction of the vector $(1, \ldots, 1)$ must
be equal to $c_i$, since the job must have equalized its waiting times at
the different resources in equilibrium.   The first order condition
can then be used to conclude that $\sum_{j \in r_i} -b_{ij}' =
c_i$.  (Note that $-b_{ij}' \geq 0$ since $b_j$ is decreasing
in waiting time.)  This ensures that $\hat{C}_i(\v{V}_i) > 0$ for all
$i$ and feasible $\v{V}_i$, and in particular that $c_i \max_j 
V_{ij}^* - \sum_j (-b_{ij}')(V_{ij}^*) \geq 0$.  

Finally, we note that $\hat{C}$ is {\em linear} in the waiting times
at different resources; thus the first order conditions for optimality
for job $i$ decompose across the resources.  Thus if we consider now
{\em independent} games at each resource $j$, where player $i$ with $j
\in r_i$ plays
with unit time cost  $-b_{ij}'$, then it follows that a
HTE for that game would also be $\boldsymbol{\beta}^{(j)*}$.  
 Since the equilibrium actions are the same in the network game and in
 the independent single server games, 
while the optimal social cost is lower in the latter,
the result then
follows using the same argument as in \cite{Joh04} by using
the price of anarchy bound for single resource games established in
Theorem 8.






\end{document}